\DeclareMathOperator{\rwd}{rwd}
\DeclareMathOperator{\cwd}{cwd}
\DeclareMathOperator{\rk}{rk}
\DeclareMathOperator{\cutrk}{cutrk}
\newcommand{\mathsym}[1]{{}}
\newtheorem{theorem}{Theorem}[section]
\newtheorem{lemma}{Lemma}[section]
\newtheorem{proposition}[lemma]{Proposition}
\newtheorem{corollary}{Corollary}[section]
\newtheorem{observation}{Observation}
\title{\bf\Large{{Coloring even-hole-free graphs with no star cutset}}}
\author{Ngoc Khang Le 
\thanks{This work was performed within the framework of the LABEX MILYON (ANR-10-LABX-0070) of Universit\'e de Lyon, within the program ``Investissements d'Avenir'' (ANR-11- IDEX-0007) operated by the French National Research Agency (ANR). Partially supported by ANR project Stint under reference ANR-13-BS02-0007.}
\thanks{Email: ngoc-khang.le@ens-lyon.fr}
}
\affil{LIP, ENS de Lyon, Lyon, France}
\begin{document}
\maketitle

\begin{abstract}
A \emph{hole} is a chordless cycle of length at least $4$. A graph is \emph{even-hole-free} if it does not contain any hole of even length as an induced subgraph. In this paper, we study the class of even-hole-free graphs with no star cutset. We give the optimal upper bound for its chromatic number in terms of clique number and a polynomial-time algorithm to color any graph in this class. The latter is, in fact, a direct consequence of our proof that this class has bounded rank-width. 
\end{abstract}

\section{Introduction} \label{S:1}

All graphs in this paper are finite, simple and undirected. Let $F$ be a graph, we say that a graph $G$ is \emph{$F$-free} if it does not contain $F$ as an induced subgraph. Let $\mathcal{F}$ be a (possibly infinite) family of graphs. A graph $G$ is \emph{$\mathcal{F}$-free} if it is $F$-free, for every $F\in\mathcal{F}$. A \emph{hole} is a chordless cycle of length at least $4$. A hole is \emph{even} (\emph{odd}) if it contains an even (resp. odd) number of nodes. 

Let us first introduce perfect graphs, a graph class which has a very close relation with even-hole-free graphs and is the motivation to study this class. A graph $G$ is \emph{perfect} if for every induced subgraph $H$ of $G$, $\chi(H)=\omega(H)$, where $\chi(H)$ denote the \emph{chromatic number} of $H$, i.e. the minimum number of colors needed to color the vertices of $H$ so that no two adjacent vertices receive the same color, and $\omega(H)$ denotes the size of a largest clique in $H$, where a \emph{clique} is a graph in which all the vertices are pairwise adjacent. The famous Strong Perfect Graph Theorem (conjectured by Berge \cite{B61}, and proved by Chudnovsky, Robertson, Seymour and Thomas \cite{CRST2006}) states that a graph is perfect if and only if it does not contain an odd hole nor an odd antihole as an induced subgraph (where an \emph{antihole} is a complement of a hole). The graphs that do not contain an odd hole nor an odd antihole as an induced subgraph are known as \emph{Berge} graphs.

The structure of even-hole-free graphs was first studied by Conforti, Cornu\'ejols, Kapoor and Vu\v{s}kovi\'c in \cite{CCKV2002} and \cite{CCKV20022}. They were focused on showing that even-hole-free graphs can be recognized in polynomial time, and their primary motivation was to develop techniques which can then be used in the study of perfect graphs. In \cite{CCKV2002}, they obtained a decomposition theorem for even-hole-free graphs that uses 2-joins and star, double star and triple star cutsets, and in \cite{CCKV20022}, they used it to obtain a polynomial time recognition algorithm for even-hole-free graphs. That decomposition technique is actually useful since the Strong Perfect Graph Conjecture was proved in \cite{CRST2006} by decomposing Berge graphs using skew cutsets, 2-joins and their complements. Soon after,
the recognition of Berge graphs was shown to be polynomial by Chudnovsky, Cornu\'ejols, Liu, Seymour and Vu\v{s}kovi\'c in \cite{CCLSV2005}. A better decomposition theorem for even-hole-free graphs using only 2-joins and star cutsets was given in \cite{DV13}.

Finding a maximum clique, a maximum independent set and an optimal coloring are all known to be polynomial for perfect graphs \cite{GLS81,GLS2012}. However, these algorithms rely on the ellipsoid method, which is impractical. It is still an open question to find a combinatorial algorithm for these problems. On the other hand, the complexities of finding a maximum stable set and an optimal coloring are both open for even-hole-free graphs. Note that a maximum clique of an even-hole-free graphs can be found in polynomial time, since a graph without a hole of length $4$ has polynomial number of maximal cliques and one can list them all in polynomial time \cite{F89}. 

Therefore, we would like to see if the decomposition theorem can be used to design polynomial-time algorithms for all these combinatorial problems. The general answer should be impossible since there are some kinds of decomposition which do not seem to be friendly with these problems like star or skew cutsets. On the other hand, 2-joins look very promising. Indeed, in \cite{TV12}, Trotignon and Vu{\v{s}}kovi{\'c} already gave the polynomial algorithms to find a maximum clique and maximum independent set in the subclasses of even-hole-free and Berge graphs which are fully decomposable by only 2-joins (namely, even-hole-free graphs with no star cutset and perfect graphs with no balanced skew-partition, homogenous pair nor complement 2-join). In \cite{CTTV2015}, they generalize the result for Berge graphs to perfect graph with no balanced skew-partitions. Note that an $O(n^k)$ algorithm that computes a maximum weighted independent set for a class of perfect graphs closed under complementation, yields also an $O(n^{k+2})$ algorithm that computes an optimal coloring for the same class (see for instance \cite{KS97,S2003}). Hence, all three problems (clique, independent set and coloring) are solved for perfect graph with no balanced skew-partitions. However, the coloring problem for even-hole-free graphs with no star cutset remains open despite of its nice structure. In this paper, we prove that this class has bounded rank-width, a graph parameter which will be defined in the next section. This implies that it also has bounded clique-width (a parameter which is equivalent to rank-width in the sense that one is bounded if and only if the other is also bounded). Therefore, coloring is polynomial-time solvable for even-hole-free graphs with no star cutset by combining the two results: Kobler and Rotics \cite{KR2003} showed that for any constant $q$, coloring is polynomial-time solvable if a $q$-expression is given, and Oum \cite{O2008} showed that a $(8^p-1)$-expression for any $n$-vertex graph with clique-width at most $p$ can be found in $O(n^3)$. Note that our result is strong in the sense that it implies that every graph problem expressible in monadic second-order logic formula is solvable in polynomial-time for even-hole-free graphs with no star cutset (including also finding a maximum clique and a maximum independent set).   

We also know that even-hole-free graphs are $\chi$-bounded by the concept introduced by Gy\'arf\'as \cite{G87}: A class of graphs $\mathcal{G}$ is \emph{$\chi$-bounded} with \emph{$\chi$-bounding function $f$} if for every graph $G\in \mathcal{G}$, $\chi(G)\leq f(\omega(G))$. In \cite{ACHRS2008}, it is proved that $\chi(G)\leq 2\omega(G)-1$ for every even-hole-free graph $G$. One might be interested in knowing whether this bound could be improved for the class that we are considering, even-hole-free graphs with no star cutset. Let \emph{$\rwd(G)$} denote the rank-width of some graph $G$. The main results of our paper are the two following theorems: 

\begin{theorem} \label{T1}
Let $G$ be a connected even-hole-free graph with no star cutset. Then $\chi(G)\leq \omega(G)+1$.
\end{theorem}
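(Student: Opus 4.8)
The plan is to argue by induction on $|V(G)|$ via the structure theorem for even-hole-free graphs with no star cutset (da Silva and Vu\v{s}kovi\'c~\cite{DV13}; see also~\cite{TV12}): every connected such graph is either \emph{basic} or admits a $2$-join. In the base case I would verify the inequality on each basic graph; these form a short explicit list (cliques; holes, which are necessarily odd since an even-hole-free hole is odd; and possibly a few further graphs), and on each of them $\chi\le\omega+1$ is immediate ($\chi=\omega$ for a clique, $\chi=3=\omega+1$ for an odd hole). So suppose $G$ is connected, even-hole-free, has no star cutset, and has a $2$-join.

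Let $(X_1,X_2)$ be a $2$-join with special sets $A_i,B_i\subseteq X_i$, chosen suitably---following the analysis in~\cite{TV12}---so that the decomposition blocks are well behaved. For $i\in\{1,2\}$ I would form the block $G_i$ from $G[X_i]$ by replacing $X_{3-i}$ with a marker gadget: a path joining a clique $A_i^{*}$ of size $|A_{3-i}|$ complete to $A_i$ to a clique $B_i^{*}$ of size $|B_{3-i}|$ complete to $B_i$, whose length is chosen to match the common parity of the direct $A_{3-i}$--$B_{3-i}$ paths in $X_{3-i}$ (this parity is well defined in an even-hole-free graph). Using the standard analysis of $2$-joins in even-hole-free graphs~\cite{CCKV2002,DV13} one gets that each $G_i$ is again connected, even-hole-free and has no star cutset, that $|V(G_i)|<|V(G)|$, and that $\omega(G_i)\le\omega(G)$: the only cliques created are $A_i\cup A_i^{*}$ and $B_i\cup B_i^{*}$, of sizes $|A_1|+|A_2|$ and $|B_1|+|B_2|$, which are at most $\omega(G)$ because $A_1\cup A_2$ and $B_1\cup B_2$ are cliques of $G$. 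By the induction hypothesis each $G_i$ has a proper coloring $c_i$ using at most $\omega(G)+1$ colors.

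It remains to glue $c_1$ and $c_2$ into a proper $(\omega(G)+1)$-coloring of $G$, and this is the step I expect to be the main obstacle. Since the only edges of $G$ between $X_1$ and $X_2$ are those of the two complete bipartite graphs between $A_1,A_2$ and between $B_1,B_2$, it suffices to permute the colors of one block so that the colors appearing on $A_1$ become disjoint from those on $A_2$ and the colors on $B_1$ become disjoint from those on $B_2$. That such a permutation exists is a Hall-type fact: in the block $G_i$ the marker cliques already push the colors on $A_i$ (resp.\ $B_i$) off $|A_{3-i}|$ (resp.\ $|B_{3-i}|$) prescribed colors, and combined with $|A_1|+|A_2|\le\omega(G)$ and $|B_1|+|B_2|\le\omega(G)$---together with the finer structural properties of $2$-joins in even-hole-free graphs, in particular the relative position of $A_i$ and $B_i$ inside $X_i$---this leaves just enough room to satisfy both requirements at once, the single color beyond $\omega(G)$ being exactly the slack that is needed. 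Making this count precise, while keeping track of the class membership of the blocks, is the technical heart of the argument. Finally, the bound is best possible, since odd holes belong to the class and attain $\chi=\omega+1$.
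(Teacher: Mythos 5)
There is a genuine gap, and it sits exactly where you place it: the gluing step. Your plan is to color the two blocks independently with $\omega(G)+1$ colors and then find one permutation $\pi$ of the palette of $G_2$ with $\pi(c_2(A_2))\cap c_1(A_1)=\emptyset$ and $\pi(c_2(B_2))\cap c_1(B_1)=\emptyset$ simultaneously. No Hall-type count as sketched delivers this: a color appearing on both $A_2$ and $B_2$ must be sent by $\pi$ outside $c_1(A_1)\cup c_1(B_1)$, and since $|A_1|+|B_1|$ is not bounded by $\omega(G)$ (only $A_1\cup A_2$ and $B_1\cup B_2$ are constrained), the colors used on $A_1\cup B_1$ can exhaust the whole palette, leaving no room. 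This ``recolor one side of a 2-join to match the other'' difficulty is precisely why coloring across 2-joins is not routine. A second, independent problem is your block construction: if $|A_{3-i}|\ge 2$, the marker clique $A_i^{*}$ consists of pairwise adjacent vertices with identical closed neighborhoods, and for two such twins $u,v$ the set $\{u\}\cup(N(u)\setminus\{v\})$ is a star cutset of the block. So your blocks leave the class and the induction hypothesis does not apply to them; reverting to the standard marker \emph{paths} (single vertex $a_2$ complete to $A_1$) keeps the blocks in the class but then only forbids one color on $A_i$, which removes the leverage your counting relies on. (Minor further points: the basic graphs are not ``a few further graphs'' but two infinite families, long pyramids and extended nontrivial basic graphs, though the bound is easy to check on them; and termination of the induction needs the extreme 2-join decomposition tree, since a block need not have fewer vertices than $G$ when $|X_{3-i}|\le 5$.)

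For contrast, the paper never glues two colorings. It proves instead that $G$ admits a \emph{nice elimination order}: an ordering in which each vertex, when removed, has neighborhood equal to a clique plus at most one extra (suitably labelled) vertex; greedy coloring in reverse order then gives $\chi(G)\le\omega(G)+1$ directly. The order is built by induction along an extreme 2-join decomposition tree: the basic block contributes an order ending in the vertices around the marker path (Lemma 3.3), the other block contributes its order with the marker-path endpoints substituted by $A_1$ or $B_1$, and a bookkeeping device (``special graphs'' with sets $C_G,F_G$ recording marker-path artifacts) keeps the invariant through the recursion. If you want to salvage your approach, you would have to prove a much stronger inductive statement about which colorings of the blocks are achievable, or switch to an ordering/elimination invariant as the paper does.
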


\begin{theorem} \label{T2}
Let $G$ be a connected even-hole-free graph with no star cutset. Then $\rwd(G)\leq 3$.
\end{theorem}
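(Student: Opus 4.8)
The plan is to prove Theorem~\ref{T2} by induction on $|V(G)|$, using the decomposition theorem of Da Silva and Vu\v{s}kovi\'c \cite{DV13}: a connected even-hole-free graph with no star cutset is either \emph{basic} (a clique, a hole, or one of the graphs on the explicit basic list) or admits a $2$-join. So that the induction survives passing to the blocks of a $2$-join, I will prove the following strengthening. \emph{Let $G'$ be an even-hole-free graph with no star cutset, and let $\mathcal{M}$ be a family of pairwise vertex-disjoint marker paths of $G'$ --- induced paths whose internal vertices have degree $2$ and whose two ends are each complete to some prescribed set of vertices, exactly as produced when one forms a block of a $2$-join. Then $G'$ admits a rank-decomposition of width at most $3$ in which every $M \in \mathcal{M}$ induces a subtree.} The key feature is that the \emph{number} of paths in $\mathcal{M}$ will not enter the width bound: each inductive step adds one new auxiliary marker path to each block and subsequently deletes it, while the inherited members of $\mathcal{M}$ are carried along untouched.

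Suppose $G'$ has a $2$-join $(X_1,X_2)$ with attachment sets $A_i,B_i \subseteq X_i$, and let $G_i$ be the block obtained from $G'[X_i]$ by adding a marker path $M_i$ of bounded length whose two ends are complete to $A_i$ and to $B_i$ respectively. Each $G_i$ is again even-hole-free with no star cutset --- this is exactly the property that makes the class ``fully decomposable by $2$-joins'', cf.~\cite{TV12} --- is strictly smaller than $G'$, and, once one checks that the $2$-join splits no marker path, inherits the members of $\mathcal{M}$ contained in $X_i$. Apply the induction hypothesis to $(G_i,\{M_i\} \cup (\mathcal{M} \cap X_i))$ to get a rank-decomposition $(T_i,\mathcal{L}_i)$ of width at most $3$ in which $M_i$ and every inherited marker path induces a subtree. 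Let $y_i$ be the vertex of $T_i$ at which the subtree carrying $M_i$ is attached; delete that subtree (keeping $y_i$), obtaining a tree $T_i^{*}$ with leaf set $X_i$, and form $T$ from $T_1^{*}$ and $T_2^{*}$ by adding the edge $y_1y_2$ (suppressing $y_i$ if it becomes a degree-$2$ vertex). This $(T,\mathcal{L})$ is a rank-decomposition of $G'$, and every member of $\mathcal{M}$ still induces a subtree of it, since each of them lived entirely inside one $T_i^{*}$.

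It remains to bound the width of $T$. The edge $y_1y_2$ induces the cut $(X_1,X_2)$, whose cut-rank is at most $2$ because the bipartite graph between $X_1$ and $X_2$ is a union of two complete bipartite graphs. Now take an edge $f$ inside $T_i^{*}$, inducing a cut $(Y, V(G') \setminus Y)$ with $Y \subseteq X_i$. The point is that the $2$-join is \emph{faithfully represented} by $M_i$: over $\mathrm{GF}(2)$, the characteristic vectors of $A_i \cap Y$ and of $B_i \cap Y$ span both the column space of the $Y$-versus-$X_{3-i}$ submatrix of the adjacency matrix of $G'$ and the column space of the $Y$-versus-$M_i$ submatrix of the adjacency matrix of $G_i$; since the $Y$-versus-$(X_i \setminus Y)$ parts literally coincide in $G'$ and in $G_i$, the cut $(Y, V(G') \setminus Y)$ has exactly the same cut-rank in $G'$ as the cut that $f$ induces in $G_i$, and the latter is at most $3$. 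Hence $T$ has width at most $3$. For the base case, when $G'$ is basic one argues on the explicit list: a clique has rank-width $1$, a hole has rank-width $2$ via the ``caterpillar'' decomposition following its cyclic order, and in both cases a family of disjoint marker paths --- intervals, in a hole --- can be made into subtrees at a cost of at most one extra unit of width, so width $3$ suffices; the remaining basic graphs are dealt with by combining these pictures. The strengthened statement follows, and Theorem~\ref{T2} is the case $\mathcal{M} = \emptyset$.

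The main obstacle, I expect, is the bookkeeping of the strengthened hypothesis rather than the rank arithmetic of the gluing. One has to be sure that a $2$-join never splits an inherited marker path and never forces the width above $3$ once a path is pinned to be a subtree, and that the basic graphs, \emph{with} such subtree constraints imposed, genuinely fit into width $3$ and not $4$; it is at these two points, not in the gluing, that the constant $3$ is actually determined, and the parity constraints that govern the lengths of marker paths in even-hole-free $2$-joins are the natural tool for ruling out the awkward splitting configurations. Once one observes that everything crossing a $2$-join is controlled by the two characteristic vectors of the attachment sets, the estimate for the glued decomposition is routine.
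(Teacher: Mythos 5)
Your overall strategy is the same as the paper's: induct along a $2$-join decomposition, strengthen the statement so that prescribed flat/marker paths must be ``separated'' (induce a subtree) in the rank-decomposition, glue the two blocks' decompositions at the attachment points of the marker-path subtrees, and observe that the cut-rank of any cut inside one side is unchanged because everything crossing the $2$-join is spanned by the characteristic vectors of $A_i\cap Y$ and $B_i\cap Y$. That gluing analysis is correct and matches the paper's Lemma~\ref{rank-2-join} essentially verbatim. But you have left a genuine gap precisely where you suspect one: the base case. The blocks that come out basic in this class are never cliques or holes (Observation~\ref{ob:1}); they are long pyramids or extended nontrivial basic graphs (line graphs of trees plus two adjacent apexes $x,y$), and for these you must exhibit a width-$3$ rank-decomposition in which an \emph{arbitrary} prescribed family of disjoint flat paths is separated. ``Combining these pictures'' does not do this. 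The paper's proof of Lemma~\ref{rank-basic} builds a characteristic tree from the extended cliques of the line graph, assembles caterpillar decompositions of the flat paths along it, and then needs a structural fact (Lemma~\ref{pr4}: $x$ and $y$ each meet an extended clique in at most one vertex, and never the same clique) to argue that for a cut $(U,V(G)\setminus U)$ where $U$ is a union of branches hanging off one extended clique $K$, the vertices of $U$ see only three neighbourhood types outside $U$ (namely $K\setminus U$, the trace of $x$, and the trace of $y$), whence rank at most $3$. This is where the constant $3$ is actually established, and it is not routine; without it your induction has no anchor.

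A second, smaller gap: you say ``once one checks that the $2$-join splits no marker path,'' but an arbitrary $2$-join of a block may well cross an inherited marker path (its $A_1$--$A_2$ or $B_1$--$B_2$ edges could lie on such a path), and then your induction hypothesis cannot be applied. One needs to know that the $2$-joins can be chosen, at every level simultaneously, to be non-crossing with respect to all marker paths created so far, and moreover extreme (one block basic) so that the recursion terminates in the basic list. This is exactly the content of the extreme $2$-join decomposition tree of Trotignon and Vu\v{s}kovi\'c (Lemma~\ref{non-crossing} here, from \cite{TV12}); it is a citable result, but it must be invoked rather than waved at, since it is what licenses the inductive set-up you describe.
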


The rest of our paper is organized as follows. In Section \ref{S:2}, we formally define every notion and mention all the results that we use in this paper. The proof of Theorem \ref{T1} is presented in Section \ref{S:3} and the proof of Theorem \ref{T2} is given in Section \ref{S:4}.

\section{Preliminaries} \label{S:2}

Let $G(V,E)$ be a graph. For $X\subseteq V(G)$, we denote by $G\setminus X$ the graph obtained from $G$ by removing all the vertices in $X$. In case $X=\{v\}$, we write $G\setminus v$ instead of $G\setminus \{v\}$. We also denote by $G[X]$ the subgraph of $G$ induced by some $X\subseteq V(G)$. For $v\in V(G)$, let $N_G(v)$ denote the set of neighbors of $v$ in $G$.  For $X\subseteq V(G)$, let $N_G(X)$ denote the set of vertices in $V(G)\setminus X$ adjacent to a vertex in $X$. We also write $N(v)$ or $N(X)$ instead of $N_G(v)$ or $N_G(X)$ if there is no ambiguity. Let $A\subseteq V(G)$ and $b\in V(G)\setminus A$, we say that $b$ is \textit{complete} to $A$ if $b$ is adjacent to every vertex in $A$. A \textit{clique} in $G$ is a set of pairwise adjacent vertices. A \textit{stable set}, or an \textit{independent set} in $G$ is a set of pairwise non-adjacent vertices.  A \textit{path} $P$ is a graph with vertex-set $\{p_1,\ldots,p_k\}$ such that either $k=1$, or for $i, j\in\{1,\ldots,k\}$, $p_i$ is adjacent to $p_j$ iff $|i-j| = 1$.  We call $p_1$ and $p_k$ the \textit{ends} of the path, $\{p_2,\dots,p_{k-1}\}$ its \textit{interior} and also call each vertex in $\{p_2,\dots,p_{k-1}\}$ \emph{interior vertex}. Let $P^*$ denote the path obtained from $P$ by removing its two ends. A \textit{flat} path in $G$ is a path such that all of the interior vertices are of degree $2$. A \textit{hole} $H$ is a graph with vertex-set $\{h_1,\ldots, h_k\}$ such that $k\geq 4$ and for $i,j\in\{1,\ldots,k\}$, $h_i$ is adjacent to $h_j$ iff $|i-j|=1$ or $|i-j|=k-1$. The \textit{length} of a path or a hole is the number of its edges. A hole of length $k$ is called a \emph{$k$-hole} Note that a path may have length $0$. A graph is \textit{even-hole-free} if it does not contain any hole of even length as an induced subgraph. Our proof heavily relies on the decomposition lemmas for even-hole-free graphs with no star cutset given by Trotignon and Vu{\v{s}}kovi{\'c} in \cite{TV12}. Hence, in the next part of this section, the formal definitions needed to state these lemmas will be given.

In a connected graph $G$, a subset of nodes is a \emph{cutset} if its removal yields a disconnected $G$.  A cutset $S\subseteq V(G)$ is a \emph{star cutset} if $S$ contains a node $x$ adjacent to every node in $S\setminus x$. A cutset $S\subseteq V(G)$ is a \emph{clique cutset} if $S$ is a clique. It is clear that clique cutset is a particular star cutset. The only vertex of a clique cutset of size $1$ is called the \emph{cut-vertex}.

A \emph{$2$-join} in a graph $G$ is a partition $(X_1, X_2)$ of $V(G)$ with specified sets $(A_1, A_2, B_1, B_2)$ such that the followings hold:
\begin{itemize}
\item $|X_1|, |X_2| \geq 3$.
\item For $i=1,2$, $A_i\cup B_i \subseteq X_i$, and $A_i$ and $B_i$ are nonempty and disjoint.
\item Every node of $A_1$ is adjacent to every node of $A_2$, every node of $B_1$ is adjacent to every node of $B_2$, and these are the only adjacencies between $X_1$ and $X_2$.
\item For $i=1,2$, the graph induced by $X_i$, $G[X_i]$, contains a path with one end in $A_i$ and the other in $B_i$. Furthermore, $G[X_i]$ does not induce a path.
\end{itemize}

In this case, we call $(X_1, X_2, A_1, B_1, A_2, B_2)$ a \textit{split} of $(X_1, X_2)$. We also denote by $C_i$ the set $X_i\setminus(A_i\cup B_i)$ for $i=1,2$. Since the goal of decomposition theorems is to break our graphs into smaller pieces that we can handle inductively, we need a way to construct them. \emph{Blocks of decomposition} with respect to a $2$-join (will be defined below) are built by replacing each side of the $2$-join by a path of length at least $3$ and the next lemma shows that for even-hole-free graphs, there exists a unique way to choose the
parity of that path.

\begin{lemma}[\cite{TV12}] \label{parity} 
Let $G$ be an even-hole-free graph and $(X_1, X_2, A_1, B_1, A_2, B_2)$ be a split of a $2$-join of $G$. Then for $i = 1, 2$, all the paths with an end in $A_i$, an end in $B_i$ and interior in $C_i$ have the same parity.
\end{lemma}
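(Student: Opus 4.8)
Fix $i=1$ (the case $i=2$ is symmetric). Let $P$ and $Q$ be two paths, each having one end in $A_1$, one end in $B_1$, and interior in $C_1$; I want to show $|P|\equiv|Q|\pmod 2$. The idea is to close each of them into a hole using a fixed path on the other side of the $2$-join, so that the parity of the hole's length pins down the parity of the path.

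First I would build a reference path on the $X_2$-side. By the last condition in the definition of a $2$-join, $G[X_2]$ contains a path from $A_2$ to $B_2$, so there is a \emph{shortest} path $R$ between the sets $A_2$ and $B_2$ inside $G[X_2]$. Being shortest, $R$ is chordless; its endpoints are some $a_2\in A_2$ and $b_2\in B_2$ with $a_2\neq b_2$ (since $A_2\cap B_2=\emptyset$); and every interior vertex of $R$ lies in $C_2$, since an interior vertex in $A_2\cup B_2$ would allow $R$ to be shortened. In particular $|R|\geq 1$.

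Now, given $P$ with ends $a_1\in A_1$ and $b_1\in B_1$, form the cycle $\Gamma_P$ by concatenating: $P$ from $a_1$ to $b_1$; the edge $b_1b_2$ (present because every node of $B_1$ is complete to $B_2$); $R$ from $b_2$ to $a_2$; and the edge $a_2a_1$ (present because every node of $A_1$ is complete to $A_2$). Since $X_1\cap X_2=\emptyset$, this is a genuine cycle, of length $|P|+|R|+2\geq 4$ (note $|P|\geq 1$ as $a_1\neq b_1$). The key point is that $\Gamma_P$ is chordless: $P$ and $R$ are themselves chordless (paths are induced by the convention fixed earlier); the interior of $P$ lies in $C_1$, which is anticomplete to $X_2$; the interior of $R$ lies in $C_2$, which is anticomplete to $X_1$; and the only remaining candidate chords between $P$ and $R$ are $a_1b_2$ and $b_1a_2$, both absent because $a_1\notin B_1$, $b_1\notin A_1$ and the only edges between $X_1$ and $X_2$ are the ones inside $A_1{-}A_2$ and $B_1{-}B_2$. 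Hence $\Gamma_P$ is a hole, and since $G$ is even-hole-free its length $|P|+|R|+2$ is odd, i.e.\ $|P|\equiv|R|+1\pmod 2$. Running the identical argument with $Q$ gives $|Q|\equiv|R|+1\pmod 2$, so $|P|\equiv|Q|\pmod 2$.

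The only step requiring care is the chordlessness of $\Gamma_P$; this is exactly where the full strength of the $2$-join definition is used (that $C_1,C_2$ are anticomplete to the opposite side, that the cross-edges are precisely the two complete bipartite graphs, and that $A_i,B_i$ are disjoint). Everything else — existence and basic properties of the shortest path $R$, and the final parity bookkeeping — is routine.
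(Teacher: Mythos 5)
Your proof is correct and complete: the paper itself states this lemma without proof (citing \cite{TV12}), and your argument --- closing each $A_1$--$B_1$ path through a fixed chordless $A_2$--$B_2$ path on the other side to form a hole, whose odd length forces the parity --- is exactly the standard argument for this fact. All the delicate points (interior of the shortest path $R$ lies in $C_2$, chordlessness of $\Gamma_P$ via the restriction of cross-edges to $A_1\hbox{--}A_2$ and $B_1\hbox{--}B_2$, and length at least $4$) are checked correctly.
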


Let $G$ be an even-hole-free graph and $(X_1, X_2, A_1, B_1, A_2, B_2)$ be a split of a $2$-join of $G$. The \textit{blocks of decomposition} of $G$ with respect to ($X_1$, $X_2$) are the two graphs $G_1$, $G_2$ built as follows. We obtain $G_1$ by replacing
$X_2$ by a \textit{marker path} $P_2$ of length $k_2$, from a vertex $a_2$ complete to $A_1$, to a vertex $b_2$ complete to $B_1$ (the interior of $P_2$ has no neighbor in $X_1$). We choose $k_2=3$ if the length of all the paths with an end in $A_2$, an end in $B_2$ and interior in $C_2$ is odd (they have the same parity due to Lemma
\ref{parity}), and $k_2=4$ otherwise. The block $G_2$ is obtained similarly by replacing $X_1$ by a marker path $P_1$ of length $k_1$ with two ends $a_1$, $b_1$.

Now we present some definitions for the basic classes in the decomposition theorem for even-hole-free graphs. Let $x_1,x_2,x_3,y$ be four distinct nodes such that $x_1,x_2,x_3$ induce a triangle. A \emph{pyramid} is a graph induced by three paths $P_{x_1y}=x_1\ldots y$, $P_{x_2y}=x_2\ldots y$, $P_{x_3y}=x_3\ldots y$ such that any two of them induce a hole. By the definition, at most one of these paths is of length $1$. A pyramid is \textit{long} if all three paths are of length greater than $1$. Note that in an even-hole-free graph, the lengths of all these three paths have the same parity. 

An \emph{extended nontrivial basic} graph $R$ is defined as follows:
\begin{enumerate}
\item $V(R)=V(L)\cup\{x,y\}$.
\item $L$ is the line graph of a tree $T$.
\item $x$ and $y$ are adjacent, $x,y\notin V(L)$.
\item Every maximal clique of size at least $3$ in $L$ is called an \emph{extended} clique. $L$ contains at least two extended cliques.
\item The nodes of $L$ corresponding to the edges incident with vertices of degree one in $T$ are called \emph{leaf nodes}. Each leaf node of $L$ is adjacent to exactly one of $\{x,y\}$, and no other node of $L$ is adjacent to $\{x,y\}$.
\item These are the only edges in $R$.
\end{enumerate}

Note that the definition of the extended nontrivial basic graph we give here is simplified compared to the one from the original paper \cite{DV13} (since they  prove a decomposition theorem for a more general class, namely, $4$-hole-free odd-signable graphs), but it is all we need in our proof. The following property of $R$ is easy to observe in even-hole-free graphs with no star cutset:
\begin{lemma} \label{pr4} $x$ $($and $y)$ has at most one neighbor in every extended clique. Furthermore, if $x$ has some neighbor in an extended clique $K$, then $N(y)\cap K=\emptyset$.
\end{lemma}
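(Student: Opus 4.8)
The plan is to handle the two assertions separately: the bound ``at most one neighbour per extended clique'' will follow from the absence of a star cutset, while the ``furthermore'' part will follow from even-hole-freeness.

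For the first assertion I would argue by contradiction, assuming $x$ has two neighbours $u,v$ in an extended clique $K$. Since $L$ is the line graph of the tree $T$, its maximal cliques are exactly the sets of edges incident to a single vertex of $T$, so $K=K_t$, the set of edges of $T$ incident to some vertex $t$ with $\deg_T(t)\ge 3$. Because $v$ is adjacent to $x$ it is a leaf node, say $v=ts$ with $s$ a leaf of $T$; as $s$ has no other incident edge, $N_L(v)=K\setminus\{v\}$, and since $v$ sees exactly one of $x,y$ this gives $N_R(v)=(K\setminus\{v\})\cup\{x\}$. The key observation is then that $N_R(v)$ is a star cutset: it is a star with centre $u$, since $u$ is complete to $K\setminus\{u\}$ ($K$ being a clique) and adjacent to $x$; and it is a cutset because, after its removal, $v$ becomes isolated while $V(L)\setminus K$ remains nonempty — here I would use that $L$ has a second extended clique $K'\neq K$, which as a maximal clique cannot be contained in $K$. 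This contradicts the hypothesis on $R$. The statement for $y$ follows by the symmetry of the definition in $x$ and $y$.

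For the ``furthermore'' part I would show that if $x$ has a neighbour $u\in K$ and $y$ has a neighbour $w\in K$, then $\{x,u,w,y\}$ induces a $4$-hole: both $u$ and $w$ are leaf nodes, so $u\ne w$, $u\not\sim y$ and $w\not\sim x$ (each leaf node sees exactly one of $x,y$), while $u\sim w$ since both lie in the clique $K$, and $x\sim y$ by definition. This $4$-hole contradicts even-hole-freeness.

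I expect no real obstacle here; the only steps needing a little care are reading off $N_R(v)=(K\setminus\{v\})\cup\{x\}$ from the line-graph structure (this is exactly where the notion of leaf node is used) and checking that $N_R(v)$ is genuinely a cutset, i.e.\ that $R$ is connected and $V(L)\setminus K\neq\emptyset$ — the latter being the only place the hypothesis ``$L$ contains at least two extended cliques'' is needed.
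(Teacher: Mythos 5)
Your proof is correct and follows essentially the same route as the paper: two neighbours of $x$ in an extended clique force a star cutset (your set $N_R(v)=(K\setminus\{v\})\cup\{x\}$ is exactly the paper's $(\{u\}\cup N(u))\setminus\{v\}$, just derived more explicitly from the line-graph structure), and a neighbour of $x$ together with a neighbour of $y$ in the same extended clique yields a $4$-hole. The extra care you take in verifying that the cutset really disconnects and that the four vertices really induce a hole is welcome but not a departure from the paper's argument.
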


\begin{proof}	
If $x$ has two neighbors $a$, $b$ in some extended clique $K$, then $N(a)\setminus\{b\}=N(b)\setminus\{a\}$, implying that there is a star cutset $S=(\{a\}\cup N(a))\setminus \{b\}$ in $R$ separating $b$ from the rest of the graph, a contradiction. Also, if $x$ and $y$ both have a neighbor in a same extended clique, called $a$ and $b$, respectively, then $\{x,a,b,y\}$ induces a $4$-hole, a contradiction.
\end{proof}

An even-hole-free graph is \textit{basic} if it is one of the
following graphs:
\begin{itemize}
	\item a clique,
	\item a hole,
	\item a long pyramid, or
	\item an extended nontrivial basic graph.
\end{itemize}

Now, we are ready to state the decomposition theorem for even-hole-free graphs given by Da Silva and Vu{\v{s}}kovi{\'c}.

\begin{theorem}[Da Silva, Vu{\v{s}}kovi{\'c} \cite{DV13}]
  A connected even-hole-free graph is either basic or it has a $2$-join or a star cutset.
\end{theorem}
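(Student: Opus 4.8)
The plan is to prove the statement in its contrapositive-flavoured form: if a connected even-hole-free graph $G$ has no $2$-join and no star cutset, then $G$ is basic. The first step is to harvest the consequences of even-hole-freeness for the standard Truemper configurations. A $4$-hole is itself an even hole, and in any theta or prism the three holes obtained by combining pairs of the three internally disjoint paths cannot all be odd, so $G$ contains no $4$-hole, no theta and no prism; in any pyramid of $G$ the three paths must have the same parity; and for any wheel $(H,v)$ in $G$ the sectors of $H$ cut out by $N(v)\cap V(H)$ have tightly controlled lengths, since each sector closes into a hole through $v$. These facts are what make the subsequent case analysis finite.

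Since a clique cutset is a star cutset, we may assume $G$ has none, and then run a maximal-structure argument organised by the richest Truemper-type structure present in $G$. Case~(i): $G$ contains a long pyramid $\Sigma$. One analyses, for every vertex $u\notin V(\Sigma)$ and then for every connected subgraph of $G-\Sigma$, the way it attaches to $\Sigma$; even-hole-freeness together with the absence of a star cutset force these attachments to be very restricted (essentially each such vertex must be local to a single one of the three paths), and from this one concludes that either $G=\Sigma$, or two of the three paths of $\Sigma$, extended by the vertices hanging off them, form the two sides of a $2$-join. Case~(ii): $G$ has no long pyramid but contains an induced subgraph $L$ that is the line graph of a tree with at least two maximal cliques of size $\geq 3$; take $L$ maximal with this property and analyse the attachments of $G-L$ to $L$. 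One shows that either the attachments are exactly those allowed in the definition of an extended nontrivial basic graph (so $G$ is basic), or a flat path of $L$ yields a $2$-join, or, as in the proof of Lemma~\ref{pr4}, a vertex of $L$ together with its neighbourhood minus one of its vertices separates the graph and hence is a star cutset. Case~(iii): $G$ has none of these rich structures, so $G$ is ``thin'' (triangle-free, or with only isolated triangles); combining $4$-hole-freeness with the absence of a star cutset then forces $G$ to be a single hole or a clique.

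The technical heart --- and the step I expect to be the main obstacle --- is the attachment analysis of Cases~(i) and~(ii): classifying exactly how an arbitrary vertex, and then an arbitrary connected subgraph, disjoint from the fixed structure can attach to a pyramid or to a line graph of a tree, and showing that as soon as the structure is large enough a flat portion of it is homogeneous enough to serve as a side of a $2$-join while the branch points supply star cutsets. This is where all the wheel and $3$-path-configuration constraints recorded in the first step get used, spread over many subcases, and it is also where one has to verify the side conditions in the definition of a $2$-join (each side must contain a path from $A_i$ to $B_i$ and must not itself be a path). One way to shorten the work is to prove the pyramid-free case first --- where $G$ is built entirely out of line graphs of trees --- and then bootstrap to the general case by handling pyramids separately; another is to start from the older Conforti--Cornu\'ejols--Kapoor--Vu\v{s}kovi\'c decomposition and argue that, under even-hole-freeness, each of its additional outcomes (double-star and triple-star cutsets, and amalgams) refines to a star cutset or a $2$-join, at the cost of relying on that earlier theorem.
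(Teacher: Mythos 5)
This theorem is not proved in the paper at all: it is imported verbatim from Da Silva and Vu\v{s}kovi\'c \cite{DV13}, where its proof occupies the bulk of a long structural paper. So the relevant question is whether your proposal would constitute a proof on its own, and it would not. What you have written is a correct and reasonably well-informed roadmap --- the preliminary parity facts are right (every theta and every prism contains an even hole by pigeonhole on the parities of the three paths, the three paths of a pyramid must have equal parity, wheel sectors are parity-constrained), and the organisation by ``richest structure present'' (long pyramid, then line graph of a tree with two large cliques, then neither) does match how such decomposition theorems are actually proved. But the entire mathematical content of the theorem lives in the step you yourself flag as ``the main obstacle'': the classification of how an arbitrary vertex, and then an arbitrary connected component of $G\setminus \Sigma$ or $G\setminus L$, attaches to the fixed structure. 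That classification is dozens of lemmas and subcases in \cite{DV13}; asserting that the attachments ``are very restricted'' and ``essentially each such vertex must be local to a single one of the three paths'' is a statement of the desired conclusion, not an argument for it, and it is exactly the kind of claim that fails in unexpected subcases (vertices seeing the apex, vertices with neighbours in two paths forming new wheels, components attaching to branch vertices of $L$, etc.).

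Beyond incompleteness, at least one terminal step is asserted without any support and is not obviously true: in your Case~(iii) you claim that a graph with no long pyramid and no suitable line graph of a tree is ``thin (triangle-free, or with only isolated triangles)'' and hence a hole or a clique. Nothing you have set up implies this; ruling out the intermediate configurations (graphs with many triangles that are neither cliques nor contain the two named rich structures) is itself a substantial part of the real proof. Your proposed shortcut via the older Conforti--Cornu\'ejols--Kapoor--Vu\v{s}kovi\'c theorem has the same character: showing that double-star and triple-star cutsets ``refine'' to star cutsets or $2$-joins under even-hole-freeness is precisely the hard content of \cite{DV13}, not a routine reduction. In short: the skeleton is right, but there is no proof here, and for the purposes of this paper the theorem should simply be treated as a cited black box.
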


By this theorem, we already know that even-hole-free graphs with no star cutset always have a $2$-join. But we might prefer something a bit stronger for our purpose. A $2$-join is called \textit{extreme} if one of its block of decomposition is basic. The two following lemmas (which can be found in Sections 3 and 4 in \cite{TV12}) say that: our blocks of decomposition with respect to a $2$-join remain in the class and our class is fully decomposable by extreme $2$-joins. This is convenient for an inductive proof.

\begin{lemma}[Trotignon, Vu{\v{s}}kovi{\'c}  \cite{TV12}] \label{remain} 
Let $G$ be a connected even-hole-free graph with no star cutset and $(X_1,X_2)$ is a $2$-join of $G$. Let $G_1$ be a block of decomposition with respect to this $2$-join. Then $G_1$ is a connected even-hole-free graph with no star cutset.
\end{lemma}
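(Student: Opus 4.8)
The plan is to verify three things in turn: $G_1$ is connected, even-hole-free, and has no star cutset. Throughout write $(X_1,X_2,A_1,B_1,A_2,B_2)$ for the split, $C_i=X_i\setminus(A_i\cup B_i)$, and $P_2=a_2p_1\cdots b_2$ for the marker path, so $a_2$ is complete to $A_1$, $b_2$ is complete to $B_1$, and every interior vertex of $P_2$ has degree $2$ in $G_1$. Connectedness is immediate: as $G$ is connected and $X_2\neq\emptyset$, every component of $G[X_1]$ meets $A_1\cup B_1$; hence $P_2$, all of $A_1$, all of $B_1$, a path of $G[X_1]$ from $A_1$ to $B_1$, and so every component of $G[X_1]$, all lie in one component of $G_1$.

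For even-hole-freeness I would start from a hypothetical even hole $H$ of $G_1$ and build an even hole of $G$. If $V(H)\cap V(P_2)=\emptyset$ then $H$ itself works. If $H$ uses an interior vertex of $P_2$, then, since the interior of $P_2$ is a flat path, $V(P_2)\subseteq V(H)$ and $H=P_2\cup Q$ with $Q$ an induced path of $G[X_1]$ from $A_1$ to $B_1$ whose interior lies in $C_1$ (here one uses that $H$ is induced and $a_2,b_2$ are complete to $A_1,B_1$). Replacing $P_2$ by a shortest path $Q'$ of $G[X_2]$ from $A_2$ to $B_2$ --- which is induced with interior in $C_2$ --- gives a hole $Q\cup Q'$ of $G$, and by Lemma~\ref{parity} together with the definition of $k_2$ one has $\ell(Q')\equiv k_2\pmod 2$, so $Q\cup Q'$ is again even, a contradiction. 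The remaining possibility is that $V(H)$ meets $V(P_2)$ in $\{a_2\}$, $\{b_2\}$ or $\{a_2,b_2\}$; these are treated the same way, substituting a vertex of $A_2$ for $a_2$ and/or a vertex of $B_2$ for $b_2$ (such a vertex has exactly the same neighbours inside $V(H)$ as the vertex it replaces), the case $\{a_2,b_2\}$ requiring a little extra care to keep the new hole chordless.

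The real content is that $G_1$ has no star cutset, and here I would argue by contradiction: a star cutset $S$ of $G_1$ with centre $c$ will be turned into one of $G$. Because interior vertices of $P_2$ have degree $2$, if $S$ meets the interior of $P_2$ then $c\in V(P_2)$, and $c$ cannot be interior to $P_2$ (otherwise $S\subseteq N_{G_1}[c]$ consists of at most three consecutive vertices of $P_2$, whose removal leaves $G_1$ connected). Using this, the fact that $a_2$ and $b_2$ have no common neighbour, and the freedom to drop superfluous vertices from $S$, one reduces --- up to the $A$/$B$ symmetry --- to two cases: (i) $S\cap V(P_2)=\emptyset$, and (ii) $S\cap V(P_2)=\{a_2\}$ with $c\in A_1\cup\{a_2\}$. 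In case (i), $S=S\cap X_1$, and a component of $G_1\setminus S$ other than the one containing $P_2$ must lie in $C_1$; its neighbourhood in $G$ is then contained in $S$, so $S$ is already a star cutset of $G$. In case (ii) with $c\in A_1$, some component $W$ of $G[X_1]\setminus(S\cap X_1)$ avoids $B_1$, $N_G(W)\subseteq(S\cap X_1)\cup A_2$, and $(S\cap X_1)\cup A_2$ is a star cutset of $G$ with centre $c$: indeed $c$ is complete to $(S\cap X_1)\setminus c$ as the original centre, and to all of $A_2$ because $c\in A_1$.

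The hard case --- the one I expect to be the main obstacle --- is case (ii) with $c=a_2$. Then $S\cap X_1\subseteq A_1$ and again some component $W$ of $G[X_1]\setminus(S\cap X_1)$ avoids $B_1$. If $W$ meets $C_1$, then for any $a^*\in A_2$ the set $\{a^*\}\cup N_G(a^*)$ is a star cutset of $G$, since it destroys all of $N_G(W\cap C_1)\subseteq A_1\cup W$ while $B_1$ survives. If $W\subseteq A_1$, set $T=N_{G[X_1]}(W)\subseteq A_1$; then $N_G(W)=T\cup A_2$ and $T\cup A_2$ separates $W$ from $B_1$ in $G$, but it need not be a star, which is the crux. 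The way out is to observe that it is enough for $G[T]$ or $G[A_2]$ to contain a vertex complete to the rest of its part --- such a vertex is complete to the other part by the $2$-join, hence to all of $(T\cup A_2)$ minus itself, and can serve as the centre --- and that if neither $G[T]$ nor $G[A_2]$ had such a vertex, one could pick nonadjacent $t_1,t_2\in T$ and nonadjacent $a_1',a_2'\in A_2$, whereupon $\{t_1,a_1',t_2,a_2'\}$ would induce a $4$-hole in $G$, which is impossible. So even-hole-freeness of $G$ delivers the missing centre. In every case $G$ then has a star cutset, contradicting the hypothesis; the only thing left is the routine bookkeeping that each set built above really is a cutset (its complement always retains $X_2$, respectively $B_1$) and really is a star.
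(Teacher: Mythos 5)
First, note that the paper does not prove this lemma at all: it is quoted from Trotignon and Vu\v{s}kovi\'c \cite{TV12} (Sections 3--4 of that paper), so there is no in-paper proof to compare against; your attempt has to be judged on its own. The connectedness argument and the no-star-cutset argument are essentially sound (the reduction to cases (i) and (ii), the use of $A_2$ or the clique $T\cup A_2$ as a star in $G$, and the $4$-hole argument producing a centre are all correct, modulo the ``routine bookkeeping'' you defer --- e.g.\ the claim that deleting $N_{G_1}[c]$ for $c$ interior to $P_2$ leaves $G_1$ connected silently needs that every component of $G[X_1]$ reaches $B_1$, which itself requires the no-star-cutset and $4$-hole-free hypotheses).

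The genuine gap is in the even-hole-freeness argument, precisely in the case $V(H)\cap V(P_2)=\{a_2,b_2\}$ that you dismiss as needing ``a little extra care''. There $H=a_2\hbox{-}Q^1\hbox{-}b_2\hbox{-}Q^2\hbox{-}a_2$ with $Q^1,Q^2$ anticomplete $A_1$--$B_1$ paths with interiors in $C_1$, and the parity relations $\ell(Q^i)\equiv \ell(Q')+1\equiv k_2+1 \pmod 2$ force $\ell(H)=\ell(Q^1)+\ell(Q^2)+4$ to be \emph{even whenever this configuration exists}; so the real task is to rule the configuration out, and your substitution only does this when some $a^*\in A_2$ and $b^*\in B_2$ are non-adjacent. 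When $A_2$ is complete to $B_2$ every substituted cycle acquires the chord $a^*b^*$ and the argument collapses. This is not a removable technicality: take $A_1=\{u_1,u_2\}$ and $B_1=\{w_1,w_2\}$ stable, $C_1=\{m_1,m_2\}$ with paths $u_i m_i w_i$, and $X_2$ a $K_4$ on $A_2\cup B_2$ with $|A_2|=|B_2|=2$. Every hole of this graph has length $5$, so it is even-hole-free and has the indicated $2$-join with $k_2=3$, yet the block $G_1$ contains the $8$-hole $a_2u_1m_1w_1b_2w_2m_2u_2a_2$. The lemma survives only because this $G$ has a star cutset ($N[a]$ for $a\in A_2$ separates $\{m_1,w_1\}$ from $\{m_2,w_2\}$); in other words, any correct proof of even-hole-freeness of the block must invoke the no-star-cutset hypothesis in exactly this case, and your sketch never does. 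You need an additional argument showing that in a graph with no star cutset one cannot have two anticomplete $A_1$--$B_1$ paths whose $A_1$-ends are non-adjacent and whose $B_1$-ends are non-adjacent while $A_2$ is complete to $B_2$.
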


\begin{lemma}[Trotignon, Vu{\v{s}}kovi{\'c} \cite{TV12}] \label{decomposition} 
A connected even-hole-free graph with no star cutset is either basic or it has an extreme $2$-join.
\end{lemma}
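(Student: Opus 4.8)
The plan is to prove the lemma by a minimality argument on the sizes of the two sides of a $2$-join: we show that the block of decomposition built on a \emph{smallest} side is forced to be basic. Let $G$ be a connected even-hole-free graph with no star cutset, and assume $G$ is not basic. Since $G$ has no star cutset, the decomposition theorem of Da Silva and Vu{\v{s}}kovi{\'c} above supplies a $2$-join of $G$. Among all $2$-joins of $G$ --- each taken with a split and with either orientation of its two sides --- fix one, say $(X_1,X_2)$ with split $(X_1,X_2,A_1,B_1,A_2,B_2)$, for which $|X_1|$ is as small as possible; in particular $|X_1|\le|X_2|$. Let $G_1$ be the block of decomposition obtained from $G$ by replacing $X_2$ by the marker path $P_2$ of length $k_2\in\{3,4\}$, whose ends $a_2,b_2$ are complete to $A_1$ and to $B_1$ respectively and whose interior consists of $k_2-1\le 3$ vertices, each of degree $2$ in $G_1$. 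By Lemma \ref{remain}, $G_1$ is a connected even-hole-free graph with no star cutset. I claim that $G_1$ is basic; since this is precisely the statement that $(X_1,X_2)$ is an extreme $2$-join, it proves the lemma.

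Suppose not. Applying to $G_1$ the reasoning we applied to $G$, we get a $2$-join $(Y_1,Y_2)$ of $G_1$ with split $(Y_1,Y_2,D_1,E_1,D_2,E_2)$ and middle parts $C_i'=Y_i\setminus(D_i\cup E_i)$, and the aim is to convert it into a $2$-join of $G$ whose smaller side has fewer than $|X_1|$ vertices, contradicting the choice of $(X_1,X_2)$. Consider first the ``clean'' case in which all of $V(P_2)$ lies on one side, say $V(P_2)\subseteq Y_2$ with the interior of $P_2$ contained in $C_2'$. Then $Y_1\subseteq X_1$, and we undo the marker replacement: delete $V(P_2)$ from $Y_2$ and reinsert $X_2$ with its original attachments to $A_1$ and $B_1$. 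One checks that $(Y_1,\,(Y_2\setminus V(P_2))\cup X_2)$ is again a $2$-join of $G$ --- the connection sets on the $Y_1$-side remain $D_1,E_1$ (one verifies that $A_1\cap Y_1$ and $B_1\cap Y_1$ already lie in $D_1\cup E_1$), the sets $A_2,B_2,C_2$ take over the role of $V(P_2)$ on the other side, and the parity and path conditions survive because $k_2$ was chosen, via Lemma \ref{parity}, to match the parity of the $A_2$--$B_2$ paths with interior in $C_2$. Since $G[V(P_2)]$ induces a path, the definition of a $2$-join forbids $Y_2=V(P_2)$, so $Y_2\supsetneq V(P_2)$ and hence $Y_1\subsetneq X_1$; thus the new $2$-join of $G$ has a side of size $|Y_1|<|X_1|$, a contradiction. (If instead $V(P_2)\subseteq Y_1$, swap the two sides of $(Y_1,Y_2)$ and repeat.)

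The step I expect to carry essentially all of the difficulty is the normalization: proving that one may indeed assume $V(P_2)$ lies entirely on one side of $(Y_1,Y_2)$. Because $P_2$ is short (length only $3$ or $4$) there is no generic ``long flat path lies on one side of every $2$-join'' statement to invoke, so one must analyze directly how $D_1,E_1,D_2,E_2$ and $C_1',C_2'$ can meet the at most five vertices of $P_2$. The guiding facts are: (i) a vertex of $C_i'$ has no neighbour in $Y_{3-i}$, so every edge of $P_2$ that crosses the $2$-join is incident with $D_i\cup E_i$, and since an interior vertex of $P_2$ has degree $2$ this forces any connection set that meets the interior of $P_2$ to be a single vertex, which together with the requirement that $G[Y_i]$ contain a path from its $D$-side to its $E$-side severely restricts --- and, I expect, after using the remaining structure eliminates --- the configurations in which $P_2$ is split between $Y_1$ and $Y_2$; and (ii) any configuration left over, in particular any genuine split of $P_2$, should yield a short even hole or a star cutset of $G_1$ (typically a star cutset centred at an interior vertex of $P_2$), which is impossible. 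The other ingredients --- interior vertices of $P_2$ avoiding the connection sets, the undoing step producing a genuine $2$-join of $G$, the inequality $|Y_1|<|X_1|$, and the observation that the argument stays valid (indeed becomes immediate) when $|X_1|$ is already as small as $3$ --- are routine checks against the definitions of a $2$-join and of a block of decomposition, with only the parity bookkeeping via Lemma \ref{parity} worth writing out in full.
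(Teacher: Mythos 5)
This lemma is quoted from \cite{TV12}; the paper you were given contains no proof of it, so your proposal can only be measured against the argument in \cite{TV12} itself. Your overall strategy --- take a $2$-join minimizing the size of one side, argue that the block of decomposition built on that minimal side must be basic, and otherwise pull a $2$-join of the block back to a $2$-join of $G$ with a strictly smaller side --- is exactly the strategy of \cite{TV12} (their notion of a \emph{minimally-sided} $2$-join), and the bookkeeping you describe for the pull-back ($Y_1\subsetneq X_1$ because $Y_2\ne V(P_2)$, parity preserved by Lemma \ref{parity}, connection sets $D_1,E_1$ surviving with $A_2,B_2,C_2$ standing in for the marker path) is the right bookkeeping.

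The genuine gap is the step you yourself flag: you never prove that the marker path $P_2$ lies on one side of an arbitrary $2$-join $(Y_1,Y_2)$ of $G_1$ with its interior inside $C'_1$ or $C'_2$, you only list the constraints and say you ``expect'' the remaining configurations to be eliminable. That normalization is not a routine check --- it is the technical heart of the proof in \cite{TV12}, where it is isolated as a separate lemma about how flat paths of length at least $3$ can meet a $2$-join in this class, proved by exactly the kind of case analysis you gesture at (degree-$2$ interior vertices forcing $|D_i|,|E_i|\le 2$ whenever a connection set meets the interior of $P_2$, then $4$-holes, star cutsets, or violations of the connectivity/non-path conditions killing each surviving configuration). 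Until that lemma is actually carried out, the contradiction with minimality is not available and the proof is incomplete. A secondary, smaller point you dismiss as routine but should verify: after reinserting $X_2$, the condition that $G[(Y_2\setminus V(P_2))\cup X_2]$ is not an induced path and contains the required $D$--$E$ path needs an explicit argument (the $D$--$E$ path may have to be rerouted through $X_2$, and ``not a path'' does not automatically pass to supersets when $G[X_2]$ is disconnected).
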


By Lemmas \ref{remain} and \ref{decomposition}, we know that even-hole-free graphs with no star cutset can be fully decomposed into basic graphs using only extreme $2$-joins. However, we need a little more condition to avoid confliction between these $2$-joins, that is, every $2$-join we use is \emph{non-crossing}, meaning that every marker path in the process always lies entirely in one side of every following $2$-joins (the edges between $X_1$ and $X_2$ do not belong to any marker path). Now we define the \emph{$2$-join decomposition tree} for this purpose. Note that this definition we give here is not only for even-hole-free graphs with no star cutset, but also works in a more general sense. It is well defined for any graph class with its own basic graphs. Let $\mathbb{D}$ be a class of graphs and $\mathbb{B}\subseteq \mathbb{D}$ be the set of basic graphs in $\mathbb{D}$. Given a graph $G\in\mathbb{D}$, a tree $\mathbb{T}_G$ is a \emph{$2$-join decomposition tree} for $G$ if:
\begin{itemize}
	\item Each node of $\mathbb{T}_G$ is a pair $(H,S)$, where $H$ is a graph in $\mathbb{D}$ and $S$ is a set of disjoint flat paths of $H$.
	\item The root of $\mathbb{T}_G$ is $(G,\emptyset)$.
	\item Each non-leaf node of $\mathbb{T}_G$ is $(G',S')$, where $G'$ has a $2$-join $(X_1,X_2)$ such that the edges between $X_1$ and $X_2$ do not belong to any flat path in $S'$. Let $S_1,S_2\subseteq S'$ be the set of the flat paths of $S'$ in $G'[X_1]$, $G'[X_2]$, respectively (note that $S'=S_1\cup S_2$). Let $G_1$, $G_2$ be two blocks of decomposition of $G'$ with respect to this $2$-join with marker paths $P_2$, $P_1$, respectively. The node $(G',S')$ has two children, which are $(G_1,S_1\cup \{P_2\})$ and $(G_2,S_2\cup \{P_1\})$.
	\item Each leaf node of $\mathbb{T}_G$ is $(G',S')$, where $G'\in\mathbb{B}$.
\end{itemize}   

Note that by this definition, each set $S'$ in some node $(G',S')$ of $\mathbb{T}_G$ is properly defined in top-down order (from root to leaves). A $2$-join decomposition tree is called \emph{extreme} if each non-leaf node of it has a child which is a leaf node. 

\begin{lemma}[Trotignon, Vu{\v{s}}kovi{\'c}  \cite{TV12}]
\label{non-crossing}
Every connected even-hole-free graphs with no star cutset has an extreme $2$-join decomposition tree.
\end{lemma}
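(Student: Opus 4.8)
The plan is to prove by induction the following slightly stronger statement, which gives the lemma when $S=\emptyset$: for every connected even-hole-free graph $H$ with no star cutset and every set $S$ of pairwise vertex-disjoint flat paths of $H$, there is an extreme $2$-join decomposition tree rooted at $(H,S)$ (defined exactly as above, but with $(H,S)$ in place of $(G,\emptyset)$ at the root). The induction is on the quantity $\psi(H,S)=|V(H)|-\sum_{P\in S}|V(P^*)|$, the number of vertices of $H$ that are not interior to a path of $S$; this is a nonnegative integer, so it suffices to make it drop strictly at each recursive step. If $H$ is basic, the single node $(H,S)$, which is a leaf, is the required tree, so assume from now on that $H$ is not basic.

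The crux is to produce an \emph{extreme} $2$-join $(X_1,X_2)$ of $H$ none of whose crossing edges lies on a path of $S$; I return to this below. Granting such a $2$-join, let $S_i\subseteq S$ be the paths of $S$ contained in $H[X_i]$; since every path of $S$ is connected and has no crossing edge, each lies wholly in one side, so $S=S_1\cup S_2$. Build the blocks $G_1,G_2$ with marker paths $P_2,P_1$, naming them so that $G_1$ is basic (possible as the $2$-join is extreme); by Lemma~\ref{remain}, both $G_1$ and $G_2$ are connected even-hole-free graphs with no star cutset. Now $P_1$ is a flat path of $G_2$ disjoint from $S_2$, and symmetrically $P_2$ of $G_1$ from $S_1$: the interior of a marker path consists of degree-$2$ vertices whose neighbors all lie on the marker path, while an interior vertex $w$ of a path of $S_i$ must lie in $C_i$ rather than in $A_i\cup B_i$ — indeed $w$ has degree $2$ in $H$ with both neighbors on its path, so $N_H(w)\subseteq X_i$, whereas vertices of $A_i\cup B_i$ send an edge to $X_{3-i}$ — hence gluing on the marker path creates no new neighbor for $w$ and $w$ stays of degree $2$. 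Thus $(G_1,S_1\cup\{P_2\})$ and $(G_2,S_2\cup\{P_1\})$ are admissible pairs, the former being a leaf since $G_1$ is basic. Apply the induction hypothesis to $(G_2,S_2\cup\{P_1\})$ and attach the resulting tree, together with the leaf $(G_1,S_1\cup\{P_2\})$, as the two children of the root $(H,S)$; the root then has a leaf child, and every deeper non-leaf node has one by induction, so the tree is extreme.

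It remains to check that the recursion is well-founded, i.e. $\psi(G_2,S_2\cup\{P_1\})<\psi(H,S)$. Since $|V(G_2)|=|X_2|+|V(P_1)|$ and $|V(P_1)|-|V(P_1^*)|=2$, a short computation gives $\psi(H,S)-\psi(G_2,S_2\cup\{P_1\})=|X_1|-\sum_{P\in S_1}|V(P^*)|-2$. The paths of $S_1$ are pairwise disjoint and contained in $X_1$, so $\sum_{P\in S_1}|V(P^*)|\le|X_1|-2|S_1|$; and if $|S_1|=1$ its unique path cannot fill all of $X_1$, because $H[X_1]$ does not induce a path, giving $\sum_{P\in S_1}|V(P^*)|\le|X_1|-3$ in that case. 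Since $|X_1|\ge 3$, the difference is at least $1$ whether $|S_1|$ is $0$, $1$, or at least $2$, as required.

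Finally, the step I expect to be the real obstacle: showing that a non-basic $H$ admits an extreme $2$-join with no crossing edge on a path of $S$. The plan is to start from an extreme $2$-join $(X_1,X_2)$ furnished by Lemma~\ref{decomposition}, and, if some path $P\in S$ has a crossing edge $uv$ (say $u\in A_1$, $v\in A_2$), to exploit the rigidity this forces: $v$ is complete to $A_1$, $u$ is complete to $A_2$, every interior vertex of $P$ has degree $2$ with both neighbors on $P$, $H$ has no induced $4$-hole, and — crucially — neither block has a star cutset, by Lemma~\ref{remain}. Together these constraints confine $A_1$ and $A_2$ to a few vertices on or next to $P$, and a case analysis shows that such a configuration is either outright impossible or can be fixed by shifting the interior of $P$ entirely onto one side of the $2$-join, which yields a new $2$-join — still extreme, since basicness of a block is unaffected — across which $P$ does not cross. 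Iterating over the finitely many paths of $S$ clears all crossings. This case analysis is exactly the content of the corresponding lemma of \cite{TV12}; the rest of the argument above is bookkeeping on the decomposition tree.
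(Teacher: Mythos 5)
The paper does not actually prove this lemma; it is imported verbatim from \cite{TV12}, so there is no in-paper argument to compare against. Your scaffolding --- strengthening the statement to an arbitrary set $S$ of disjoint flat paths, inducting on $\psi(H,S)$, checking that the marker paths are flat in the blocks and vertex-disjoint from the inherited paths, and verifying that the measure drops at each step --- is sound and is essentially how such a proof must be organized. (One small correction: the strengthened statement should restrict $S$ to flat paths of length at least $3$, since only marker paths of length $3$ or $4$ ever enter the recursion; for a ``flat path'' of length $1$, i.e.\ a single prescribed edge, there is no reason an extreme $2$-join avoiding it should exist, so the claim as you state it is likely false.)

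The genuine gap is the step you yourself flag: the existence, for every non-basic graph in the class, of an \emph{extreme} $2$-join none of whose crossing edges lies on a path of $S$. Lemma \ref{decomposition} only supplies some extreme $2$-join, and a flat path $P$ of length at least $3$ can perfectly well straddle it (take $A_1=\{p_i\}$ and $A_2=\{p_{i+1}\}$ for consecutive interior vertices of $P$; nothing in the definition of a $2$-join forbids this). Your proposed repair --- ``shift the interior of $P$ entirely onto one side'' --- is the right intuition, but it is precisely the content that needs proving: one must check that the shifted partition is still a $2$-join (that $|X_i|\geq 3$, that each $G[X_i]$ still contains an $A_i$--$B_i$ path and does not itself induce a path), that it remains extreme after vertices migrate across the cut, and that the shift does not create a new crossing with a different path of $S$. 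None of this is carried out; you explicitly defer it to ``the corresponding lemma of \cite{TV12}''. Since the statement under proof \emph{is} the lemma of \cite{TV12}, this deferral makes the argument circular rather than complete: everything you have actually verified is bookkeeping around the single claim that carries the mathematical content.
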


\begin{observation} \label{ob:1}
Every block of decomposition with respect to a $2$-join of a connected even-hole-free graph with no star cutset which is basic is either a long pyramid or an extended nontrivial basic graph.
\end{observation}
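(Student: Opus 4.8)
The plan is to combine Lemma~\ref{remain} with the definition of a \emph{basic} graph and to rule out the two degenerate cases by hand. Write $G$ for the connected even-hole-free graph with no star cutset, let $(X_1,X_2,A_1,B_1,A_2,B_2)$ be a split of the $2$-join under consideration, and let $G_1$ be the block of decomposition obtained by replacing $X_2$ with a marker path $P_2=a_2\dots b_2$ of length $k_2\in\{3,4\}$. Recall that this operation leaves $G[X_1]$ untouched, adds only the path $P_2$ together with the complete joins from $a_2$ to $A_1$ and from $b_2$ to $B_1$, and that the interior vertices of $P_2$ have no neighbour in $X_1$. By Lemma~\ref{remain}, $G_1$ is again a connected even-hole-free graph with no star cutset, so if $G_1$ is basic it is a clique, a hole, a long pyramid, or an extended nontrivial basic graph. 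It therefore suffices to show that $G_1$ is neither a clique nor a hole.

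For the clique case, I would simply note that $a_2$ and $b_2$ are distinct (they are the two ends of a path of length $k_2\ge 3$) and non-adjacent in $G_1$: the only edges of $G_1$ meeting $\{a_2,b_2\}$ come from $P_2$ itself and from the complete joins to $A_1$ and $B_1$, none of which makes $a_2$ adjacent to $b_2$. A clique has no two non-adjacent vertices, so $G_1$ is not a clique.

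For the hole case, I would exploit the clause ``$G[X_i]$ does not induce a path'' in the definition of a $2$-join. Since the interior of $P_2$ has no neighbour in $X_1$ and $a_2\not\sim b_2$, the set $V(P_2)$ induces in $G_1$ exactly the path $P_2$, on $k_2+1$ vertices; moreover $|V(G_1)|=|X_1|+k_2+1\ge k_2+4$ because $|X_1|\ge 3$, so $P_2$ is a \emph{proper} induced path of $G_1$. If $G_1$ were a hole, then an induced path using fewer than all vertices of the cycle would have to consist of consecutive vertices of that cycle, so $V(P_2)$ would be a contiguous arc and $G_1\setminus V(P_2)$ would be an induced path; but $G_1\setminus V(P_2)=G[X_1]$, contradicting the $2$-join definition. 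Hence $G_1$ is not a hole, and the observation follows. There is essentially no obstacle here: the only point needing a moment's justification is that a proper connected induced subgraph of a chordless cycle is an induced path on consecutive vertices, which is immediate once one observes that induced subgraphs of a chordless cycle are disjoint unions of arcs.
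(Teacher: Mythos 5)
Your proof is correct and is essentially the intended justification: the paper states this as an Observation without proof, and the natural argument is exactly yours, namely that Lemma~\ref{remain} keeps the block in the class, the non-adjacent marker-path ends rule out a clique, and the clause that $G[X_1]$ does not induce a path rules out a hole. All the small verifications (that $P_2$ is a proper induced path of $G_1$ and that the complement of an arc in a chordless cycle is a path) are handled correctly.
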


Let us review the definition of rank-width, which was first introduced in \cite{OS2006}. For a matrix
$M=\{ m_{ij}: i\in R,\, j\in C\}$ over a field $F$, let $\rk(M)$ denote its linear rank. If $X\subseteq R$, $Y\subseteq C$, then let $M[X,Y]$ be the submatrix $\{ m_{ij}: i\in X,\, j\in Y\}$ of $M$. We assume that adjacency matrices of graphs are matrices over $GF(2)$.

Let $G$ be a graph and $A$, $B$ be disjoint subsets of $V(G)$. Let $M$ be the adjacency matrix of $G$ over $GF(2)$. We define the \textit{rank} of $(A,B)$, denoted by $\rk_G(A,B)$, as $\rk(M[A,B])$. The \textit{cut-rank} of a subset $A\subseteq V(G)$, denoted by $\cutrk_G(A)$, is defined by $$\cutrk_G(A)=\rk_G(A,V(G)\setminus A).$$

A \textit{subcubic tree} is a tree such that the degree of every vertex is either one or three. We call $(T,L)$ a \textit{rank-decomposition} of $G$ if $T$ is a subcubic tree and $L$ is a bijection from $V(G)$ to the set of leaves of $T$. For an edge $e$ of $T$, the two connected components of $T\setminus e$ correspond to a partition $(A_e,V(G)\setminus A_e)$ of $V(G)$. The \textit{width} of $e$ of the rank-decomposition $(T,L)$ is $\cutrk_G(A_e)$. The \textit{width} of $(T,L)$ is the maximum width over all edges of $T$. The \textit{rank-width} of $G$, denoted by $\rwd(G)$, is the minimum width over all rank-decompositions of $G$ (If $|V(G)|\leq 1$, we define $\rwd(G)=0$). 

\begin{observation} \label{ob:2}
The rank-width of a clique is at most $1$ and the rank-width of a hole is at most $2$. 
\end{observation}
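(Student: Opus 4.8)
The plan is to treat the clique and the hole separately, working directly from the definition of cut-rank over $GF(2)$.

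For a clique $K$ on $n$ vertices: if $n\le 1$ then $\rwd(K)=0$ by convention, so assume $n\ge 2$. I would take \emph{any} rank-decomposition $(T,L)$ of $K$ (a caterpillar, say). For every edge $e$ of $T$ the induced partition $(A_e,V(K)\setminus A_e)$ has both parts nonempty, and since $K$ is complete the submatrix $M[A_e,V(K)\setminus A_e]$ of the adjacency matrix is the all-ones matrix, whose rank over $GF(2)$ is $1$. Hence $\cutrk_K(A_e)\le 1$ for every $e$, so every rank-decomposition of $K$ has width at most $1$, and $\rwd(K)\le 1$.

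For a hole $H=h_1h_2\cdots h_k$ with $k\ge 4$: here I would exhibit one good rank-decomposition, using a caterpillar tree that ``peels off'' the cycle one vertex at a time along the cyclic order. Take a spine path $u_1u_2\cdots u_{k-2}$; attach to $u_1$ two pendant leaves labelled $h_1,h_2$, attach to $u_{k-2}$ two pendant leaves labelled $h_{k-1},h_k$, and for $2\le i\le k-3$ attach to $u_i$ a single pendant leaf labelled $h_{i+1}$. This is a subcubic tree with exactly $k$ leaves, in bijection with $V(H)$. A pendant edge yields a partition $(\{h_j\},V(H)\setminus\{h_j\})$, and $M[\{h_j\},V(H)\setminus\{h_j\}]$ is just the indicator row of $N(h_j)$, of rank at most $1$. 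A spine edge $u_iu_{i+1}$ yields the partition $(\{h_1,\dots,h_{i+1}\},\{h_{i+2},\dots,h_k\})$ into two arcs of the cycle; the only edges of $H$ crossing it are $h_1h_k$ and $h_{i+1}h_{i+2}$, and since (using $k\ge 4$) the four endpoints are pairwise distinct, these two $1$'s of $M[\{h_1,\dots,h_{i+1}\},\{h_{i+2},\dots,h_k\}]$ sit in distinct rows and distinct columns, so the cut-rank is at most $2$. Thus every edge of the caterpillar has width at most $2$ and $\rwd(H)\le 2$.

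There is no serious obstacle: the clique case is immediate, and the hole case rests only on the fact that a prefix of the cyclic order of $H$ is an arc and an arc has exactly two boundary edges. The only points deserving a line of care are the degenerate small case $k=4$ (where the spine is a single edge carrying four pendant leaves) and the verification that the four relevant vertices on a spine cut are pairwise distinct; both are routine.
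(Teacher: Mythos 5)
Your proof is correct. The paper states this as an observation with no proof supplied, and your argument (all-ones cut matrices for the clique, a caterpillar following the cyclic order for the hole so that every spine cut is crossed by exactly two edges of the cycle) is precisely the standard verification the authors are implicitly relying on.
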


\section{$\chi$-bounding function} \label{S:3}

\subsection{Special graphs}

Recall that the bound of chromatic number for even-hole-free graphs ($\chi(G)\leq 2\omega(G)-1$) is obtained by showing that there is a vertex whose neighborhood is a union of two cliques \cite{ACHRS2008}. We would like to do the same things for our class. However, since our class is not closed under vertex-deletion, instead of showing that there exists a vertex whose neighborhood is ``simple'', we have to show that there is an elimination order such that the neighborhood of each vertex is ``simple'' in the remaining graph. To achieve that goal, we introduce \emph{special} graphs. In fact, this is just a way of labeling vertices for the sake of an inductive proof.

A graph $G$ is \textit{special} if it is associated with a pair $(C_G, F_G)$ such that:
\begin{itemize}
	\item $C_G\subseteq V(G)$, $F_G\subseteq V(G)$ and $C_G\cap F_G=\emptyset$.
	\item Every vertex in $F_G$ has degree $2$.
	\item Every vertex in $C_G$ has at least one neighbor in $F_G$. 
\end{itemize}
Note that any graph can be seen as a special graph with $C_G=F_G=\emptyset$.  

Suppose that $G$ has some split $(X_1, X_2, A_1, B_1, A_2, B_2)$ of a $2$-join.  Due to this new notion of special graph, we want to specify the pairs $(C_{G_1}, F_{G_1})$ and $(C_{G_2}, F_{G_2})$ for the blocks of decomposition $G_1$, $G_2$ of $G$ with respect to this $2$-join to ensure that the two blocks we obtained are also special. Let $C_i=C_G\cap X_i$, $F_i=F_G\cap X_i$ ($i=1,2$), we choose the pair $(C_{G_1}, F_{G_1})$ as follows: 
\begin{itemize}
	\item If $|A_1|=1$, the only vertex in $A_1$ is in $C_G$ and $A_2\cap F_G\neq \emptyset$, then set $C_a=\emptyset$, $F_a=\{a_2\}$. Otherwise set $C_a=\{a_2\}$, $F_a=\emptyset$.
	\item If $|B_1|=1$, the only vertex in $B_1$ is in $C_G$ and $B_2\cap F_G\neq \emptyset$, then set $C_b=\emptyset$, $F_b=\{b_2\}$. Otherwise set $C_b=\{b_2\}$, $F_b=\emptyset$.
	\item Finally, set $C_{G_1}=C_1\cup C_a\cup C_b$, $F_{G_1}=F_1\cup F_a\cup F_b\cup V(P_2^*)$.
\end{itemize}

The pair $(C_{G_2}, F_{G_2})$ for block $G_2$ is chosen similarly.

\begin{lemma} \label{remains}
Let $G$ be a special connected even-hole-free graph with no star cutset associated with $(C_G, F_G)$ and $(X_1, X_2, A_1, B_1, A_2, B_2)$ be a split of a $2$-join of $G$. Let $G_1$ be a block of decomposition with respect to this $2$-join. Then $G_1$ is a special graph associated with $(C_{G_1}, F_{G_1})$. 
\end{lemma}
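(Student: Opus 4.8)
The plan is to invoke Lemma~\ref{remain}, which already tells us that $G_1$ is a connected even-hole-free graph with no star cutset, so that only the three conditions in the definition of a special graph have to be checked for the pair $(C_{G_1},F_{G_1})$. Write $P_2$ for the marker path, with ends $a_2$ (complete to $A_1$) and $b_2$ (complete to $B_1$); since $k_2\geq 3$, the sets $\{a_2\}$, $\{b_2\}$ and $V(P_2^*)$ are pairwise disjoint sets of vertices not belonging to $X_1$, and moreover exactly one of $C_a,F_a$ equals $\{a_2\}$ while exactly one of $C_b,F_b$ equals $\{b_2\}$. Hence $C_{G_1},F_{G_1}\subseteq V(G_1)$ is immediate, and $C_{G_1}\cap F_{G_1}=\emptyset$ follows from $C_G\cap F_G=\emptyset$ together with a quick inspection of the cases defining $C_a,F_a,C_b,F_b$.

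The core of the proof is the following structural fact about the $2$-join, which is the only place where even-hole-freeness and the no-star-cutset hypothesis are used: \emph{if $A_1$ contains a vertex of degree $2$ in $G$, then $|A_2|=1$} --- and, by the symmetry of the definition of a $2$-join, the same statement with $A_1,A_2$ interchanged and with $B_1,B_2$ in place of $A_1,A_2$. To see this, assume $v\in A_1$, $\deg_G(v)=2$, and $|A_2|\geq 2$. Since $v$ is complete to $A_2$, this forces $|A_2|=2$, say $A_2=\{a,a'\}=N_G(v)$, and $v$ has no neighbor in $X_1$. If $a$ and $a'$ are adjacent, then $S=(\{a\}\cup N_G(a))\setminus\{v\}$ is a star cutset with center $a$: it contains $a$ and $a'$, hence both neighbors of $v$, so $v$ is isolated in $G\setminus S$, whereas $B_1$ is nonempty and disjoint from $S\cup\{v\}$ (the neighbors of $a\in A_2$ inside $X_1$ are exactly $A_1$), a contradiction. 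If $a$ and $a'$ are nonadjacent and $|A_1|\geq 2$, pick $v'\in A_1\setminus\{v\}$; then $v'$ is complete to $A_2$ and nonadjacent to $v$, so $\{v,a,v',a'\}$ induces a $4$-hole, again a contradiction. Finally, if $a$ and $a'$ are nonadjacent and $|A_1|=1$, then the path from $A_1$ to $B_1$ with interior in $X_1\setminus(A_1\cup B_1)$ guaranteed by the definition of a $2$-join must start at $v$ with an edge inside $X_1$, contradicting that $v$ has no neighbor there. This proves the fact.

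Conditions~2 and~3 now follow. For condition~2: the vertices of $P_2^*$ have degree $2$ in $G_1$ by construction; if $a_2\in F_a$ then the defining case has $|A_1|=1$, so $\deg_{G_1}(a_2)=|A_1|+1=2$ (and symmetrically for $b_2$); and a vertex $v\in F_1$ still has degree $2$ in $G_1$ --- its incident edges are untouched if $v\notin A_1\cup B_1$, whereas if $v\in A_1$ the fact gives $|A_2|=1$, so the unique edge from $v$ into $X_2$ is replaced by the unique edge from $v$ to $a_2$ (and symmetrically if $v\in B_1$). For condition~3: if $a_2\in C_a$ then its neighbor on $P_2^*$ lies in $F_{G_1}$ (and symmetrically for $b_2$); and for $v\in C_1$, choose a neighbor $w\in F_G$ of $v$ in $G$ --- if $w\in X_1$ then $w\in F_1\subseteq F_{G_1}$ and the edge $vw$ survives in $G_1$, while if $w\in X_2$ then $vw$ is a $2$-join edge, so after possibly swapping the roles of $A$ and $B$ we have $v\in A_1$ and $w\in A_2$; since $w$ has degree $2$, the fact gives $|A_1|=1$, so $A_1=\{v\}$ with $v\in C_G$ and $A_2\cap F_G\ni w$, which is exactly the case putting $a_2$ in $F_a$, and then $a_2\in F_{G_1}$ is a neighbor of $v$ in $G_1$.

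I expect the displayed structural fact to be the only genuine difficulty; once it is available, everything reduces to checking the cases in the definition of $(C_{G_1},F_{G_1})$, each of which is forced.
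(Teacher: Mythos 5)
Your proposal is correct and follows essentially the same route as the paper: the key step in both is showing that a degree-$2$ vertex in $A_1$ (resp.\ $A_2$) forces $|A_2|=1$ (resp.\ $|A_1|=1$), using the no-star-cutset hypothesis, $4$-hole-freeness, and the connectivity requirement in the definition of a $2$-join, after which the three conditions of specialness are verified by the same case analysis on the definition of $(C_{G_1},F_{G_1})$. Your case split on whether the two vertices of $A_2$ are adjacent is a minor variation of the paper's clique-cutset argument and is equally valid.
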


\begin{proof}
Remark that since $G$ is $4$-hole-free, one of $A_1$ and $A_2$ must be a clique (similar for $B_1$ and $B_2$). Now we prove that if one of $A_1$ and $A_2$ intersects $F_G$, then the other set is of size $1$. Suppose that $A_1\cap F_G\neq \emptyset$, we will prove that $|A_2|=1$. Indeed, since $f\in A_1\cap F_G$ has degree $2$, $|A_2|\leq 2$. If $|A_2|=2$ then $f$ is the only vertex in $A_1$ (otherwise, $A_2$ must be a clique and $N(f)$ is a clique cutset separating $f$ from the rest of $G$, a contradiction to the fact that $G$ has no star cutset). Therefore, $f$ has no neighbor in $X_1$, so there is no path from $A_1$ to $B_1$ in $G[X_1]$, a contradiction to the definition of a $2$-join. This proves that $|A_2|=1$. Now, $G_1$ is a special graph associated with $(C_{G_1}, F_{G_1})$ because:

\begin{enumerate}
	\item Every vertex $f$ in $F_{G_1}$ has degree $2$.
	
If $f\in F_1\setminus (A_1\cup B_1)$, then degree of $f$ remains the same in $G$ and $G_1$. If $f\in F_1\cap (A_1\cup B_1)$, say $f\in F_1\cap A_1$, from the above remark, $|A_2|=1$, therefore the degree of $f$ remains the same in $G$ and $G_1$. If $f\in F_a\cup F_b$ then $|A_1|=1$ by the way we choose $F_{G_1}$, so $f$ has degree $2$ in $G_1$. If $f\in P_2^*$, then it is an interior vertex of a flat path, therefore it has degree $2$. 
	
	\item Every vertex $c$ in $C_{G_1}$ has at least a neighbor in $F_{G_1}$.
	
If $c\in C_1$ and its neighbor in $F_G$ is in $X_1$, then $c$ has a neighbor in $F_1$. If $c\in C_1$ and its neighbor in $F_G$ is in $A_2\cup B_2$, say $A_2$, then its neighbor in $F_{G_1}$ is $a_2$. If $c\in C_a\cup C_b$, then its neighbor in $F_{G_1}$ is one of the two ends of $P_2^*$.
\end{enumerate}
\end{proof}

\subsection{Elimination order} 
Let $G$ be a special graph associated with $(C_G, F_G)$. A vertex $v\in V(G)$ is \textit{almost simplicial} if its neighborhood induces a clique or a union of a clique $K$ and a vertex $u$ such that $u\notin C_G$ ($u$ can have neighbor in $K$). An elimination order $v_1$,\ldots, $v_k$ of vertices of $G\setminus F_G$ is \textit{nice} if for every $1\leq i\leq k$, $v_i$ is \textit{almost simplicial} in $G\setminus (F_G\cup \{v_1,\ldots,v_{i-1}\})$. The next lemma is the core of this section.  
\begin{lemma} \label{L1}
Let $G$ be a special connected even-hole-free graph with no star cutset associated with $(C_G,F_G)$. Then $G\setminus F_G$ admits a nice elimination order. 
\end{lemma}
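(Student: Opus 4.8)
The proof should go by induction on $|V(G)|$, using the extreme $2$-join decomposition tree provided by Lemma \ref{non-crossing}. The base case is when $G$ is basic, i.e.\ $G$ is a clique, a hole, a long pyramid, or an extended nontrivial basic graph; in each of these cases I would directly exhibit a nice elimination order of $G \setminus F_G$. For a clique every vertex is simplicial, so any order works (subject to the constraint on $C_G$, but in a clique every vertex's neighborhood is a clique, so "almost simplicial" is immediate). For a hole, $F_G$ contains most vertices (degree-$2$ vertices), and what remains has very small degree, so the order is easy to produce. For a long pyramid and the extended nontrivial basic graph, I would peel off leaf-type vertices first (e.g.\ in a pyramid, start from a neighbor of the apex $y$ on one of the paths — such a vertex has degree $2$ and is simplicial; in the extended basic graph, use Lemma \ref{pr4} to control how $x,y$ attach to extended cliques and peel from the leaves of the underlying tree $T$ inward), checking at each step that the "bad" extra vertex $u$ (if any) is not in $C_G$, which should follow from the constraints built into the definition of a special graph.

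For the inductive step, $G$ has an extreme $2$-join $(X_1,X_2)$ with split $(X_1,X_2,A_1,B_1,A_2,B_2)$ and blocks $G_1,G_2$, where (say) $G_2$ is basic. By Lemma \ref{remains}, both $G_1$ and $G_2$ are special connected even-hole-free graphs with no star cutset with the pairs $(C_{G_i},F_{G_i})$ as defined before Lemma \ref{remains}. By the induction hypothesis (note $G_1$ is smaller since $X_2$ is replaced by a path of length $\leq 4$ and $|X_2| \geq 3$... here I need to be careful and instead induct on the decomposition tree directly, or on $|V(G)| + $ something), $G_1 \setminus F_{G_1}$ has a nice elimination order; and by the base case $G_2 \setminus F_{G_2}$ has one as well. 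The plan is to splice these two orders together into a nice elimination order of $G \setminus F_G$. The key point is that $F_G = (F_1 \cup F_2)$ and $V(P_1^*), V(P_2^*) \subseteq F_{G_1}, F_{G_2}$ respectively are all degree-$2$ vertices that get eliminated "for free", while the vertices of $C_1 \cup C_2$ and the attachment sets $A_i,B_i$ must be handled. Roughly, I would first run the elimination order of $G_2 \setminus F_{G_2}$ but stopping before removing $a_1, b_1$ — i.e.\ eliminate all of $C_2 \cup A_2 \cup B_2 \setminus(\text{stuff adjacent through the join})$, in the order that $G_2$ provides, arguing that being almost simplicial in $G_2 \setminus(\ldots)$ transfers to being almost simplicial in $G \setminus(\ldots)$ because the marker path $P_1$ faithfully simulates the connectivity that $X_1$ provides across the join (a vertex in $C_2$ has the same neighborhood in $G$ and in $G_2$; a vertex in $A_2$ sees $A_1$ in $G$ versus $a_1$ in $G_2$, but $A_1$ is complete to $A_2$ and — crucially — at most one of $A_1,A_2$ fails to be a clique, so the clique/clique-plus-vertex structure of its neighborhood is preserved). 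Then, once $X_2$ is reduced to the join edges, what is left of $G$ looks like $G_1$ with $P_2$ in place, and I can run the elimination order of $G_1 \setminus F_{G_1}$.

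The main obstacle will be verifying that "almost simplicial" is genuinely preserved when passing between a block and the original graph, in both directions and in the presence of the already-eliminated vertices. The delicate cases are the vertices of $A_i$ and $B_i$: in $G$ such a vertex sees the whole of $A_{3-i}$ (or $B_{3-i}$), whereas in the block it sees a single marker vertex $a_{3-i}$; one must use the $4$-hole-free-ness (so that the non-clique side is of size $1$, as extracted in the proof of Lemma \ref{remains}) together with the rule for assigning $C_a, C_b, F_a, F_b$ to make sure that when a vertex of $A_i$ is eliminated, the "extra vertex $u$" in its neighborhood — which could be $a_{3-i}$ in the block, or a vertex of $A_{3-i}$ in $G$ — satisfies $u \notin C_G$. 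This is exactly what the careful definition of $(C_{G_1}, F_{G_1})$ (the special casing on $|A_1|=1$, the only vertex of $A_1$ being in $C_G$, and $A_2 \cap F_G \neq \emptyset$) is designed to guarantee, so the proof amounts to matching up those case distinctions with the elimination steps. A secondary subtlety is the order in which $a_2, b_2$ (in $G_1$) and the endpoints of $P_2$ get eliminated relative to $C_1, A_1, B_1$: one wants the marker path to be removed late enough that the splicing is valid, i.e.\ when we switch from simulating $G_2$'s order to $G_1$'s order, the marker path $P_2$ in $G_1$ really is present, which forces us to eliminate $X_2$-side vertices before $X_1$-side vertices and to keep $a_1, b_1$ (the $P_1$ endpoints in $G_2$) until the very end of the $G_2$-phase. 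I expect all of this to be bookkeeping once the right induction parameter (the decomposition tree, processed so that the basic block is handled first) is fixed.
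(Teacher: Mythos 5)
Your overall strategy is the same as the paper's: induct along the extreme $2$-join decomposition tree of Lemma \ref{non-crossing}, construct explicit orders for the basic graphs, and splice the order of the basic block (eliminated first, with the marker path saved for last) with the order of the other block. You also correctly identify most of the pressure points: the induction parameter, the $C_a/F_a$ bookkeeping, and the need to postpone the marker-path endpoints to the end of the basic phase.

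There is, however, one step in your splice that would fail as stated. You claim that when a vertex $u\in A_2$ of the basic block is eliminated during the first phase, its neighborhood keeps the clique-plus-one-vertex shape because ``$A_1$ is complete to $A_2$ and at most one of $A_1,A_2$ fails to be a clique.'' That is not enough: $4$-hole-freeness only guarantees that \emph{one} of $A_1,A_2$ is a clique, and it may well be $A_1$ --- the set $u$ actually sees in $G$, sitting on the side you have not touched yet --- that is not a clique. Then $N'_G(u)$ contains all of the non-clique $A_1$ and $u$ need not be almost simplicial. The paper's way around this is built into Lemma \ref{L2}: the basic block's order postpones an attachment set to the end \emph{only when it is a clique} (the sets $K_1,K_2$, set to $\emptyset$ otherwise), and eliminates a non-clique attachment set during the basic phase, where across the join it sees the \emph{other} side, which is then forced to be a clique and can safely replace the single marker vertex. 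In other words, the order must guarantee that a non-clique attachment set is always gone before the opposite attachment set is touched; your uniform treatment of $A_2,B_2$ does not ensure this. Relatedly, your base case needs to be the strengthened statement of Lemma \ref{L2} (a nice order of a basic graph in which a \emph{prescribed} flat path together with the cliques attached to its ends comes last); you gesture at this requirement in the inductive step, but the base case as you state it (``directly exhibit a nice elimination order'') is too weak to support the splice, and proving the strengthened version for long pyramids and extended nontrivial basic graphs is where most of the actual work of the paper lies.
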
 
By setting $C_G=F_G=\emptyset$, we have the following corollary of Lemma \ref{L1}:
\begin{corollary}
Let $G$ be a connected even-hole-free graph with no star cutset. Then $G$ admits a nice elimination order.
\end{corollary}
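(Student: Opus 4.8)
The plan is to argue by strong induction on $|V(G)|$, using the decomposition Lemma~\ref{decomposition}: either $G$ is basic, or $G$ has an extreme $2$-join. In the basic case I will exhibit a nice order explicitly; in the $2$-join case I will splice together nice orders of the two blocks of decomposition, which are again connected even-hole-free graphs with no star cutset (Lemma~\ref{remain}) and are special for the pairs $(C_{G_1},F_{G_1})$, $(C_{G_2},F_{G_2})$ (Lemma~\ref{remains}), so that the inductive machinery applies.

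For the base case, suppose $G$ is basic. If $G$ is a clique, every vertex is simplicial and any order is nice. If $G$ is a hole, then $G\setminus F_G$ is a disjoint union of (possibly trivial) paths, and repeatedly deleting a vertex of degree at most $1$ produces a nice order. If $G$ is a long pyramid with apex $y$ and triangle $x_1x_2x_3$, I would delete $F_G$ and then peel the three branches from their free tips toward $y$, handling $x_1,x_2,x_3$ and finally $y$; every vertex met has neighbourhood either a clique or a clique together with one extra vertex, and whenever that extra vertex would be forced to lie in $C_G$, I use the defining property of $C_G$ — it has a degree-$2$ neighbour inside $F_G$, which has already been removed together with $F_G$ — to conclude that the offending vertex has in fact already dropped to degree at most $1$, so the bad configuration never actually arises. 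Finally, if $G$ is an extended nontrivial basic graph $R$, I peel off \emph{leaf nodes} of the line graph $L$, which amounts to deleting leaves of the underlying tree $T$; by Lemma~\ref{pr4} the neighbourhood of a leaf node is the rest of its extended clique (a clique) together with exactly one of $\{x,y\}$, with no edge between the two parts, so each such node is almost simplicial provided that vertex of $\{x,y\}$ lies outside $C_G$ — and, exactly as for the pyramid, if $x$ or $y$ lies in $C_G$ one first removes the forced low-degree neighbour. This extended-basic subcase is the most delicate part of the base step.

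For the inductive step, let $(X_1,X_2)$ be an extreme $2$-join and, exchanging the two sides if needed, assume the block $G_1$ obtained by replacing $X_2$ with the marker path $P_2$ is basic. Writing $F_i=F_G\cap X_i$, we have $V(G)\setminus F_G=(X_1\setminus F_1)\cup(X_2\setminus F_2)$. I claim $|V(G_2)|=|X_2|+(k_1-1)<|X_1|+|X_2|=|V(G)|$: indeed $|X_1|\ge 3>k_1-1$ unless $|X_1|=3$ and $k_1=4$, and on three vertices $G[X_1]$ cannot simultaneously fail to be a path and have all its $A_1$-to-$B_1$ paths of even length, so the exceptional case does not occur. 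Hence the induction hypothesis applies to $G_2$ and yields a nice order $\sigma_2$ of $G_2\setminus F_{G_2}$, while the base step applied to $G_1$ yields a nice order $\sigma_1$ of $G_1\setminus F_{G_1}$. I would choose $\sigma_1$ so that the marker endpoints $a_2,b_2$ are deleted last (they become isolated once $X_1\setminus F_1$ is gone), and similarly $a_1,b_1$ for $\sigma_2$. The order for $G$ is then obtained by concatenation: first run $\sigma_1$ on $X_1\setminus F_1$, then $\sigma_2$ on $X_2\setminus F_2$ (with the few vertices of $X_1$ that behave like a marker endpoint — those forming a one-element $A_1$ or $B_1$ — possibly deferred to the second phase, where the proxy vertex $a_1$ or $b_1$ is eliminated).

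The hard part, and the thing I expect to be the main obstacle, is verifying that niceness survives this splicing. Since $G[X_i]=G_i[X_i]$, the only way a vertex's neighbourhood in $G$ differs from its neighbourhood in the block where it was handled is that a vertex of $A_i$ sees all of $A_{3-i}$ in $G$ but only the single marker vertex $a_{3-i}$ in the block (and likewise with $B$). To control this I use that $G$ is $4$-hole-free, so one of $A_1,A_2$ is a clique, and more precisely $A_{3-i}$ is a clique whenever $|A_i|\ge 2$; combining this with the fact that a marker vertex is complete to exactly $A_i$, one checks that replacing $a_{3-i}$ by the clique $A_{3-i}$ carries a neighbourhood of the form ``clique, plus at most one vertex'' to another of the same form, while the way $(C_{G_1},F_{G_1})$, $(C_{G_2},F_{G_2})$ were defined — in particular the rule forcing $|A_{3-i}|=1$ whenever $A_i$ meets $F_G$ — keeps the ``plus at most one vertex'' slot disjoint from $C_G$. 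I expect the bulk of the work to be precisely this case analysis over the sizes of $A_1,A_2,B_1,B_2$, which side of each pair is a clique, and whether each set meets $F_G$ or $C_G$, together with the extended-basic subcase of the base step; everything else should be routine bookkeeping.
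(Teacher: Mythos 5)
Your overall architecture matches the paper's: the corollary is just the case $C_G=F_G=\emptyset$ of the stronger statement (Lemma~\ref{L1}), which is proved by handling basic graphs directly and splicing the orders of the two blocks of an extreme $2$-join, using Lemmas~\ref{remain} and~\ref{remains} to stay in the class of special graphs. However, there are two concrete gaps. First, your induction is on $|V(G)|$, and your justification that the non-basic block shrinks rests on a miscount: the marker path $P_1$ has length $k_1$, hence $k_1+1$ vertices, so $|V(G_2)|=|X_2|+k_1+1\geq |X_2|+4$, while $|V(G)|=|X_1|+|X_2|$ with only $|X_1|\geq 3$ guaranteed by the definition of a $2$-join. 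Thus $G_2$ may have at least as many vertices as $G$, and the induction is not well-founded as written; you would need to establish $|X_1|\geq k_1+2$ by additional structural arguments, whereas the paper sidesteps the issue entirely by inducting along the extreme $2$-join decomposition tree provided by Lemma~\ref{non-crossing}, whose existence already encodes termination.

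Second, your splicing rule defers to the second phase only the vertices of a one-element $A_1$ or $B_1$; that is not enough. Suppose $A_1$ is a clique with $|A_1|\geq 2$ while $A_2$ is not a clique (the absence of $4$-holes forces only one of the two to be a clique). If a vertex $u\in A_1$ is eliminated during the first phase, its remaining neighbourhood in $G$ contains all of $A_2$ together with its surviving neighbours in $X_1$; since $A_2$ contains two non-adjacent vertices, this is in general not a clique plus one vertex, so $u$ fails to be almost simplicial at that moment. This is precisely why the paper's Lemma~\ref{L2} carries the set $Q_P=(K_1\cup K_2\cup V(P))\setminus F_G$: when $A_1$ is a clique it equals $K_1=N(a_2)\setminus V(P_2)$, the \emph{whole} of $A_1$ is postponed past the end of the basic block's order, and is then substituted for $a_1$ at its position in the other block's order, so that $A_2$ has already been reduced before $A_1$ is touched. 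Correspondingly, your base case for the extended nontrivial basic graph must be strengthened to produce an order ending with a \emph{prescribed} flat path and the cliques at its two ends (the content of Lemma~\ref{L2}, which needs the rooting and B-vertex/E-vertex bookkeeping, not merely peeling leaf nodes of the line graph). Until these two points are repaired, the proposal is a plan rather than a proof.
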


Theorem \ref{T1} follows immediately from the above corollary since we can greedily color $G$ in the reverse order of that nice elimination order using at most $\omega(G)+1$ colors. Therefore, the rest of this section is devoted to the proof of Lemma \ref{L1}.  

\begin{lemma} \label{L2}
Let $G$ be a special basic even-hole-free graph with no star cutset associated with $(C_G, F_G)$ and $G$ is neither a clique nor a hole. Let $P$ be a flat path of length at least $2$ in $G$. We denote by $u_1$, $u_2$ the two ends of $P$. 
\begin{itemize}
	\item If $N(u_1)\setminus V(P)$ is a clique, set $K_1=N(u_1)\setminus V(P)$, otherwise set $K_1=\emptyset$. 
	\item If $N(u_2)\setminus V(P)$ is a clique, set $K_2=N(u_2)\setminus V(P)$, otherwise set $K_2=\emptyset$. 
\end{itemize}
Let $Q_P=(K_1\cup K_2 \cup V(P))\setminus F_G$. Then $G\setminus F_G$ admits a nice elimination order $v_1$,\ldots, $v_k$, where $Q_P$ is in the end of this order $($i.e. $Q_P=\{v_{k-|Q_P|+1},\ldots,v_k\})$.

\end{lemma}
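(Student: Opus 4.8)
The plan is to split on the classification of basic graphs: as $G$ is basic but neither a clique nor a hole, $G$ is a long pyramid or an extended nontrivial basic graph, and in either case I would build the order in two phases --- first eliminating the vertices of $V(G)\setminus(Q_P\cup F_G)$, peeling $G$ down until only $Q_P$ (and the untouched set $F_G$) remains, and then eliminating the vertices of $Q_P$, which become the required suffix $v_{k-|Q_P|+1},\dots,v_k$.

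First I would locate $P$. Interior vertices of a flat path have degree $2$; in a long pyramid the only degree-$2$ vertices are the interiors of its three branches, so $P$ lies on one branch $B$, while in an extended nontrivial basic graph $R=L(T)\cup\{x,y\}$ the degree-$2$ vertices are the $L$-nodes coming from edges of $T$ with both ends of degree $2$, plus $x$ or $y$ when one of them has degree $2$, so $P$ either runs along a ``thread'' $q_0q_1\cdots q_m$ of $T$ (with $q_1,\dots,q_{m-1}$ of degree $2$ in $T$) or passes through $x,y$ in one of a few small ways. A short computation then shows that $N(u_i)\setminus V(P)$ is a clique in essentially every case: in the pyramid it is $\{x_j,x_k\}$ when $u_i$ is the apex of $B$ and a single vertex when $u_i$ is interior to $B$, while if $u_i=y$ it is the two (never adjacent, by even-hole-freeness) neighbours of $y$, so $K_i=\emptyset$ there; in $R$ it is always a clique --- an extended clique with one vertex deleted, a single $L$-node, or a single one of $x,y$. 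Hence $Q_P$ is, after deleting $F_G$, the path $P$ enlarged by the cliques $K_1,K_2$ hanging on its ends, so $V(G)\setminus(Q_P\cup F_G)$ consists of ``branch remnants'': the other two branches of the pyramid, or the sub-line-graphs of $T$ hanging off $q_0$ and $q_m$ together with $x$ and $y$.

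For the first phase I would peel these vertices from their far ends inward. The mechanism is that a degree-$2$ vertex $v\notin F_G$ becomes simplicial (at most one neighbour survives) as soon as one of its two $G$-neighbours is removed or lies in $F_G$; so once a branch of the pyramid or a thread of $T$ is attacked from one end, every subsequent vertex of it is simplicial, and only the \emph{first} removed vertex of each such segment --- together with the vertices $x_1,x_2,x_3,y$ of the pyramid and $x,y$ of $R$ --- needs individual attention. For these I would verify that the single neighbour that could spoil the ``clique, or clique plus one vertex'' shape does not lie in $C_G$: a $C_G$ vertex has a neighbour in $F_G$, hence an adjacent degree-$2$ vertex, and inside a basic structure this pins down where $C_G$ can sit (for instance $x_i\in C_G$ forces the first interior vertex of $B_i$ into $F_G$, where it is already absent from $G\setminus F_G$), while even-hole-freeness and Lemma~\ref{pr4} constrain the neighbourhoods of $x$ and $y$ enough that, after the outside leaf nodes are peeled, $x$ and $y$ have collapsed to degree at most $2$ and can be removed. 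The second phase is similar but easier: $(G\setminus F_G)[Q_P]$ is a path cut by its $F_G$-vertices into segments, with a clique glued at each of the two original ends, so one removes each segment from the end opposite its clique and then each clique vertex by vertex, checking almost-simpliciality by the same $C_G$/$F_G$ analysis (the only delicate points being the vertices adjacent to $u_1$ or $u_2$).

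I expect the main obstacle to be precisely this verification of the almost-simplicial condition throughout: the peeling is geometrically obvious, but excluding, in every configuration, the forbidden pattern ``a degree-$2$ vertex whose two surviving neighbours both lie in $C_G$'' requires working through the finitely many ways $F_G$ can intersect a long pyramid or an extended nontrivial basic graph, and through the degenerate positions of $P$ (reaching an apex of the pyramid, the vertex $y$, a leaf or a branch vertex of $T$, or passing through $x$ and $y$), using even-hole-freeness to discard the bad combinations.
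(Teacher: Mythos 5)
Your overall strategy coincides with the paper's: reduce to the long pyramid and the extended nontrivial basic graph, peel $V(G)\setminus(Q_P\cup F_G)$ away from the ends of $P$, and finish with a perfect elimination order on $Q_P$. The difficulty is that your proposal stops exactly where the proof starts: you state that the ``main obstacle'' is verifying almost-simpliciality in every configuration and defer that verification, but this is not a routine check that your peeling order survives. Concretely, ``peel from the far ends inward'' eliminates each leaf node (a vertex of $L$ adjacent to $x$ or $y$) while the extended clique $K$ containing it may still have two or more surviving members. At that moment its neighbourhood is $(K\setminus\{\cdot\})\cup\{x\}$ with $x$ adjacent to no other vertex of $K$ (Lemma~\ref{pr4}), and if $x\in C_G$ this is neither a clique nor a clique plus a non-$C_G$ vertex; no amount of even-hole-freeness discards this case, since $x$ can legitimately lie in $C_G$ (it only needs one degree-$2$ neighbour in $F_G$). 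What is needed is a genuine ordering idea that your plan does not contain: the paper roots the line graph at a chosen extended clique, eliminates in post-order, and within each extended clique postpones the unique $x$/$y$-adjacent vertex until all its siblings are gone, so that its surviving neighbourhood shrinks to two vertices.

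Even then the two-vertex neighbourhoods (of these postponed vertices, of $x$, and of $y$) are only almost simplicial because at most one of the two survivors lies in $C_G$; this is not automatic and the paper secures it by two preliminary reductions you do not make: every $C_G$-vertex other than $x,y$ outside $Q_P$ can be eliminated immediately (its neighbourhood in $G\setminus F_G$ collapses to a clique because its $F_G$-neighbour has degree $2$), after which $C_G\subseteq\{x,y\}\cup Q_P$ and one checks that no remaining vertex has two $C_G$-neighbours. Without these reductions and the sibling-ordering rule, the forbidden pattern you name (``two surviving neighbours both in $C_G$'', or worse, a clique of size at least two plus a $C_G$-vertex) actually occurs along your order, so the gap is a missing idea rather than missing bookkeeping. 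Your first-phase claim that $N(u_i)\setminus V(P)$ is ``always a clique'' in the extended basic case, and your treatment of the pyramid and of $Q_P$ itself, are fine.
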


\begin{proof}
We prove the lemma when $G$ is a long pyramid or an extended nontrivial basic graph. In fact, since the proof for a long pyramid can be treated almost similarly, we only show here the proof in the case where $G$ is an extended nontrivial basic graph. Suppose that $V(G)=V(H)\cup \{x,y\}$, where $H$ is the line graph of a tree. We may assume the followings:
\begin{enumerate}
	\item \label{as1} $P$ is a maximal flat path in $G$ (two ends of $P$ are of degree $\geq 3$).  
		
	If the lemma is true when $P$ is a maximal flat path then it is also true for all subpaths of $P$, because $Q_P$ admits a perfect elimination order (an order of vertices in which the neighborhood of a vertex induces a clique at the time it is eliminated) where a fixed subpath of $P$ is in the end of this order.	
	
	\item \label{as2} All the vertices in $G\setminus (F_G\cup Q_P\cup\{x,y\})$ are not in $C_G$.
	
	Observe that the neighborhood of every vertex $v$ in $G$, except $x$ and $y$, induces a union of two cliques. Therefore, if $v\in C_G$, it must have a neighbor of degree $2$ in $F_G$, then its neighborhood in $G\setminus F_G$ actually induces a clique and it can be eliminated at the beginning of our order.
	
	\item \label{as3} Every vertex in $G\setminus (F_G\cup Q_P)$ has at most one neighbor in $C_G$.		
		
	Indeed, by the assumption \ref{as2}, $C_G\subseteq \{x,y\}\cup Q_P$. If a vertex $v\in G\setminus (F_G\cup Q_P)$ has two neighbors in $C_G$, then it must have a neighbor $u\in (C_G\cap Q_P)\setminus\{x,y\}$. By the definition of $C_G$, $v$ must be a vertex in $F_G$ since it is the only neighbor of degree $2$ of $u$, a contradiction to the choice of $v$.	
	
\end{enumerate}

Let us first forget about the flat path $P$ and the restriction of putting all the vertices of $Q_P$ in the end of the order. We will show how to obtain a nice elimination order for $G\setminus F_G$ in this case. We choose an arbitrary extended clique $K_R$ in $H$ and call it the \emph{root} clique. For each other extended clique $K$ in $H$, there exists a vertex $v\in K$ whose removal separates the root clique from $K\setminus v$ in $H$, we call it \emph{B-vertex}. We call a node \emph{E-vertex} if it is adjacent to $x$ or $y$. Note that in each extended clique $K$, we have exactly one B-vertex and at most one E-vertex (by Lemma \ref{pr4}). For the root clique $K_R$, we also add a new vertex $r$ adjacent to all the vertices of $K_R$, and let it be the B-vertex for $K_R$. Now, if we remove every edge in every extended clique, except the edges incident to its B-vertex, we obtain a tree $T_H$ rooted at $r$. Note that $V(T_H)=V(H)\cup \{r\}$. We specify the nice elimination order for $G$ where all the vertices in $V(K_R)$ are removed last (we do not care about the order of eliminating $r$, this vertex is just to define an order for $V(G)$ more conveniently). Let $O_T$ be an order of visiting $V(T_H)\setminus (V(K_R)\cup \{r\})$ satisfying:
\begin{itemize}
	\item A node $u$ in $T_H$ is visited after all the children of $u$.
	\item If $u$ is a B-vertex of some extended clique $K$, the children of $u$ must be visited in an order where the E-vertex in $K$ (if it exists) is visited last. 
\end{itemize}

Let us introduce some notions with respect to orders first. Let $O_1=v_1,\ldots,v_k$ and $O_2=u_1,\ldots,u_t$ be two orders of two distinct sets of vertices. We denote by \emph{$O_1\oplus O_2$} the order $v_1,\ldots,v_k,u_1,\ldots,u_t$. If $S$ is a subset of vertices of some order $O_1$, we denote by \emph{$O_1\setminus S$} the order obtained from $O_1$ by removing $S$. Let $u$ be a vertex, we also denote by $u$ the order of one element $u$. 

Let $O_{K_R}$ be an arbitrary elimination order for the vertices in $K_R$. Now the elimination order for $G\setminus F_G$ is $O=(O_T\setminus F_G)\oplus x\oplus y\oplus O_{K_R}$. We prove that this elimination order is nice. Indeed, let $u$ be a vertex of order $O_T$, $u\notin F_G$. If $u$ is not an E-vertex, then its neighborhood at the time it is eliminated is either a subclique of some extended clique (if its parent is a B-vertex) or a single vertex which is its parent. If $u$ is an E-vertex, since it is eliminated after all its siblings (the nodes share the same parent), its neighborhood consists of only two vertices: its parent and $x$ (or $y$). And by assumption \ref{as3}, at most one of these two vertices is in $C_G$, so $u$ is almost simplicial. Now when $x$ is removed in this order, it has at most two neighbors: one is $y$ and one is possibly a vertex in $K_R$, also not both of them are in $C_G$, so $x$ is almost simplicial. Vertex $y$ has at most one neighbor at the time it is eliminated. And finally, $K_R$ is a clique so any eliminating order for $K_R$ at this point is nice.  

Now we have to consider the flat path $P$, and put all the vertices of $Q_P$ in the end of the elimination order. There are two cases: 
\begin{itemize}

	\item $P$ is a flat path not containing $x$ and $y$. 
	
	In this case, both $K_1$ and $K_2$ are non-empty. The graph obtained from $G$ by removing $V(P)\cup \{x,y\}$ contains two connected components $H_1$, $H_2$, where $H_i$ ($i=1,2$) is the line graph of a tree. By considering $K_i$ as the root clique of $H_i$, from the above argument, we obtain two elimination orders $O_1$, $O_2$ for $H_1\setminus K_1$ and $H_2\setminus K_2$. Now, all the vertices not yet eliminated in $G$ are in $\{x,y\}\cup Q_P$. We claim that at least one of $x$, $y$ has at most two neighbors in the remaining graph. Indeed, otherwise $x$ and $y$ both have at least three neighbors, implying that they both have neighbors in $K_1$ and $K_2$, which contradicts Lemma \ref{pr4}. Suppose $x$ has at most two neighbors, in this case we eliminate $x$ first, then $y$. Note that $x$ and $y$ are both almost simplicial in this elimination order, since they have at most two neighbors (not both of them in $C_G$ according to assumption \ref{as3}) at the time they were eliminated. Finally, choose for $Q_P$ a perfect elimination order $O_Q$. Now the nice elimination order for $G\setminus F_G$ is $O=((O_1\oplus O_2)\setminus F_G)\oplus x\oplus y\oplus O_Q$.   
	\item $x$ or $y$ is an end of $P$.
	
	W.l.o.g, suppose $x$ is an end of $P$, say $x=u_1$. In this case $K_1=\emptyset$, $K_2\neq \emptyset$. The graph $H'$ obtained by removing $V(P)\cup \{y\}$ from $G$ is the line graph of a tree. We consider $K_2$ as the root clique of this graph. From above argument, we have a nice elimination order $O_{H'}$ for $V(H')\setminus K_2$. Now all the vertices left in $G$ are in $\{y\}\cup Q_P$. Observe that $y$ has at most two neighbors in the remaining graph ($x$ and possibly a vertex in $K_2$), therefore $y$ is almost simplicial and can be eliminated. Note that $x$ has no neighbor in $K_2$, since if $u\in K_2$ is adjacent to $x$, then $\{u\}\cup N(u)$ is a star cutset in $G$ separating $P^*$ from the rest of the graph ($P^*$ is non-empty since the length of $P$ is at least two). Finally, choose for $Q_P$ a perfect elimination order $O_Q$. Now the nice elimination order for $G\setminus F_G$ is $O=(O_{H'}\setminus F_G)\oplus y\oplus O_Q$.      	
\end{itemize}
\end{proof}

\begin{proof}[Proof of Lemma \ref{L1}]
By Lemma \ref{non-crossing}, $G$ has an extreme 2-join decomposition tree $\mathbb{T}_G$. Now, for every node $(G',S')$ of $\mathbb{T}_G$, $G'$ is a special connected even-hole-free graph with no star cutset associated with $(C_{G'},F_{G'})$ (by Lemmas \ref{remain} and \ref{remains}). Now we prove that for every node $(G',S')$ of $\mathbb{T}_G$, $G'$ satisfies Lemma \ref{L1}. This implies the correctness of Lemma \ref{L1} since the root of $\mathbb{T}_G$ corresponds to $G$.

First, we show that for every leaf node $(G',S')$ of $\mathbb{T}_G$, $G'$ satisfies Lemma \ref{L1}. If $G'$ is a clique then any elimination order of $G'\setminus F_{G'}$ is nice. If $G'$ is a hole, there exists a vertex $v$ such that one of its neighbors is not in $C_{G'}$, then $v$ can be eliminated first. The vertices in the remaining graph induce a subgraph of a path, therefore $G'\setminus v$ admits a nice elimination order. If $G'$ is a long pyramid or an extended nontrivial basic graph, we have a nice elimination order for $G'$ by Lemma \ref{L2}.

Now, let us prove that Lemma \ref{L1} holds for $G'$, where $(G',S')$ is a non-leaf node of $\mathbb{T}_G$. Since $\mathbb{T}_G$ is extreme, $G'$ admits an extreme $2$-join with the split $(X_1, X_2, A_1, B_1, A_2, B_2)$ and let $G_1$, $G_2$ be the blocks of decomposition of $G'$ with respect to this $2$-join. We may assume that $G_1$ is basic and $G_2$ satisfies Lemma \ref{L1} by induction. Note that $V(G')=(V(G_1)\setminus V(P_2))\cup (V(G_2)\setminus V(P_1))$. Now we try to specify a nice elimination order for $G'$ by combining the orders for $G_1$ and $G_2$. Since $G_1$ is basic, apply Lemma \ref{L2} for $G_1$ with $P=P_2$, we obtain the nice elimination order $O_1$ for $G_1\setminus (F_{G_1}\cup Q_P)$. Remark that all the vertices in $O_1$ are in $V(G')$ since we have not eliminated $Q_P$. By induction hypothesis, we obtain also a nice elimination order $O_2$ for $G_2\setminus F_{G_2}$. We create an order $O_2'$ from $O_2$ for $V(G')$ as follows ($a_1$, $b_1$ are two ends of the marker path $P_1$):
\begin{itemize}
	\item If $a_1\in C_{G_2}$ and $A_1$ is a clique, $O_2'$ is obtained from $O_2$ by substituting $a_1$ in $O_2$ by all the vertices in $A_1$ (in any order), otherwise set $O_2'=O_2\setminus\{a_1\}$.
	\item If $b_1\in C_{G_2}$ and $B_1$ is a clique, $O_2'$ is obtained from itself by substituting $b_1$ in $O_2'$ by all the vertices in $B_1$ (in any order), otherwise set $O_2'=O_2'\setminus\{b_1\}$.
\end{itemize} 
We claim that $O=O_1\oplus O_2'$ is a nice elimination order for $G'\setminus F_{G'}$. Let $N'_{G'}(u)$ ($N'_{G_1}(u)$, $N'_{G_2}(u)$) be the set of neighbors of $u$ in the remaining graph when it is removed with respect to order $O$ ($O_1$, $O_2$, respectively). 
\begin{itemize}
	\item If $u$ is a vertex in $O_1$. 
	\begin{itemize}
		\item If $u\notin A_1$ and $B_1$, then $N_{G'}(u)=N_{G_1}(u)$, because $u$ is almost simplicial in $G_1$ then it is also almost simplicial in $G'$ at the time it was eliminated.
		\item If $u\in A_1$ or $B_1$, w.l.o.g, suppose that $u\in A_1$, then $A_1$ is not a clique, because we do not eliminate $Q_P$ in $O_1$. Since one of $A_1$, $A_2$ must be a clique to avoid $4$-hole, $A_2$ is a clique. If $a_2\in F_{G_1}$, then $|A_1|=1$ and $A_1$ is a clique of size $1$, a contradiction. Then $a_2\in C_{G_1}$. Because $a_2$ was not eliminated at the time we remove $u$ in order $O_1$, $a_2\in N'_{G_1}(u)$. We can obtain $N'_{G'}(u)$ from $N'_{G_1}(u)$ by substituting $a_2$ by $A_2$, therefore $u$ remains almost simplicial in $G'$. 
	\end{itemize}
	\item If $u$ is a vertex in $O_2'$.
	\begin{itemize}
		\item If $u\in X_2\setminus (A_2\cup B_2)$, then $N_{G'}(u)=N_{G_2}(u)$, because $u$ is almost simplicial in $G_2$ then it is also almost simplicial in $G'$ at the time it was eliminated.
		\item If $u\in A_2$ or $B_2$, w.l.o.g, suppose $u\in A_2$. We may assume that $A_1$ is a clique, since otherwise it was eliminated in $O_1$ before $u$, implying $N'_{G'}(u)\subseteq N'_{G_2}(u)$ and $u$ is almost simplicial in $G'$.
		\begin{itemize}
			\item Suppose $a_1\in C_{G_2}$. If $u$ is eliminated after $a_1$, then $N'_{G'}(u)=N'_{G_2}(u)$ and $u$ is almost simplicial in $G'$. If $u$ is eliminated before $a_1$, we can obtain $N'_{G'}(u)$ from $N'_{G_2}(u)$ by substituting $a_1$ by $A_1$, therefore $u$ remains almost simplicial.
			\item Suppose $a_1\in F_{G_2}$. Since $A_1$ is a clique and it contains a vertex $v\in F_{G'}$, $|A_1|\leq 2$. If $|A_1|=2$, then $N_{G'}(v)$ is a clique cutset of size $2$ (star cutset) separating $v$ from the rest of $G'$, a contradiction. Thus $A_1=\{v\}$ and $N'_{G'}(u)=N'_{G_2}(u)$ (since $v$ is the only vertex in $A_1$ and $v\notin G'\setminus F_{G'}$) and $u$ is almost simplicial in $G'$.  
		\end{itemize}
		\item If $u\in A_1$ or $B_1$, w.l.o.g, suppose $u\in A_1$, then $A_1$ is a clique, since otherwise it was removed in $O_1$. We can obtain $N'_{G'}(u)$ from $N'_{G_2}(a_1)$ by creating a clique $K$, which is a subclique of $A_1$ ($K$ is actually the set of vertices of $A_1$ going after $u$ in $O_2'$), and make it complete to $N'_{G_2}(a_2)$. Therefore, $u$ remains almost simplicial in $G'$.
	\end{itemize}
\end{itemize}
\end{proof}

\subsection{The bound is tight}

\begin{figure}[h]
\centering
\includegraphics[width=11cm]{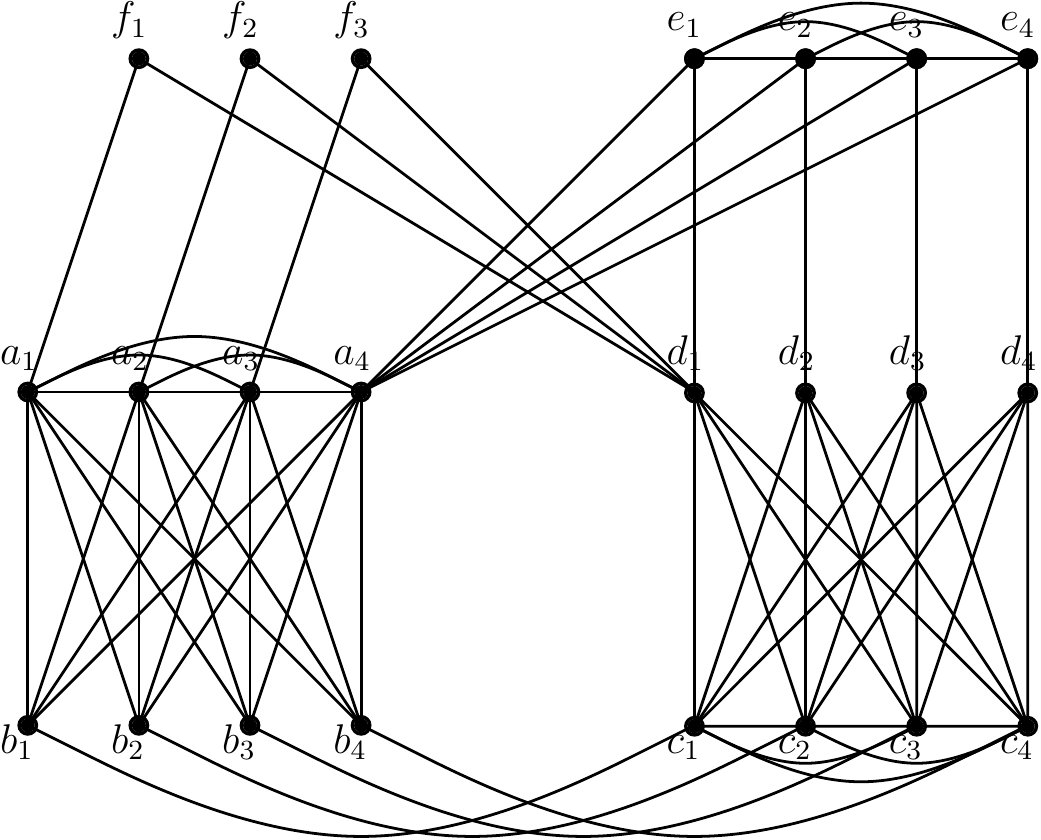}
\caption{Graph $G_5$ with $\omega(G_5)=5$ and $\chi(G_5)=6$}
\label{F:1}
\end{figure}

Now we show how to construct for any $k\geq 3$ an even-hole-free graph $G_k$ with no star cutset such that $\omega(G_k)=k$ and $\chi(G_k)=k+1$. The set of vertices of $G_k$: $V(G_k)=A\cup B\cup C\cup D\cup E\cup F$, where $A$, $C$, $E$ are cliques of size $(k-1)$; $B$, $D$ are independent sets of size $(k-1)$ and  $F$ is an independent set of size $(k-2)$. The vertices in each set are labeled by the lowercase of the name of that set plus an index, for example $A=\{a_1,\ldots,a_{k-1}\}$. The edges of $G_k$ as follows:
\begin{itemize}
	\item $A$ is complete to $B$, $C$ is complete to $D$.
	\item $b_i$ is adjacent to $c_i$, $d_i$ is adjacent to $e_i$  ($i=1,\ldots,k-1$).
	\item $a_{k-1}$ is complete to $E$.
	\item $d_1$ is complete to $F$.
	\item $a_i$ is adjacent to $f_i$ ($i=1,\ldots,k-2$).
\end{itemize} 
Figure \ref{F:1} is an example of $G_k$, where $k=5$. The fact that $G_k$ is an even-hole-free graph with no star cutset can be checked by hand.

\begin{lemma}
For every $k\geq 3$, $\omega(G_k)=k$ and $\chi(G_k)=k+1$.
\end{lemma}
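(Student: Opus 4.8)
The plan is to verify separately that $\omega(G_k)=k$ and that $\chi(G_k)=k+1$, and for the latter to show both the lower bound $\chi(G_k)\ge k+1$ and the upper bound $\chi(G_k)\le k+1$ (the upper bound is of course automatic from Theorem~\ref{T1} once we know $\omega(G_k)=k$, so the real content is the lower bound).

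First I would compute $\omega(G_k)$. The candidate maximum cliques come from the clique $A$ of size $k-1$ together with one vertex of $B$ (giving size $k$, since $A$ is complete to $B$ but $B$ is independent so at most one vertex of $B$ can be added), and symmetrically $C\cup\{b_i\}$, $C\cup\{d_i\}$... wait, more carefully: $C$ is complete to $D$, $b_i$--$c_i$, so $C\cup\{b_i\}$ is a clique of size $k$; also $\{a_{k-1}\}\cup E$ is a clique of size $k$, and $\{d_1\}\cup F$ has size $k-1$. I would argue that no clique has size $k+1$: any clique meeting two of the ``column'' gadgets is blocked because the only cross edges ($b_i c_i$, $d_i e_i$, $a_i f_i$, $a_{k-1}E$, $d_1 F$) are sparse, and the independent sets $B,D,F$ contribute at most one vertex each. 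A short case analysis on which of $A,B,C,D,E,F$ a maximum clique intersects finishes this; the bound $k$ is attained, so $\omega(G_k)=k$.

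Next, the lower bound $\chi(G_k)\ge k+1$. This is the main obstacle and the place where the specific construction matters. Suppose for contradiction there is a proper $k$-coloring $c$. The clique $A$ uses $k-1$ distinct colors; let $\alpha$ be the unique ``missing'' color on $A$. Since $A$ is complete to $B$ and $|B|=k-1$, each $b_i$ must be colored $\alpha$ (the only color not appearing on $A$), so all of $B$ gets color $\alpha$ — in particular $B$ being independent is consistent. Then the edges $b_ic_i$ force $c(c_i)\ne\alpha$ for all $i$, so the clique $C$ (size $k-1$) uses $k-1$ colors all different from $\alpha$, hence $C$ uses exactly the $k-1$ colors $\ne\alpha$ and color $\alpha$ is the one missing on $C$ as well; by the same argument applied to $C$ complete to $D$, all of $D$ is colored $\alpha$, and then the edges $d_ie_i$ force $E$ to avoid $\alpha$, so $E$ uses all $k-1$ colors $\ne\alpha$. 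Now we exploit the ``linking'' vertices: $a_{k-1}$ is complete to $E$, which already uses all $k-1$ non-$\alpha$ colors, so $c(a_{k-1})=\alpha$; but also $d_1$ is complete to $F$ with $|F|=k-2$, and $a_i$ adjacent to $f_i$ for $i\le k-2$ — I need to track how $F$ and the $a_i$ interact to force a contradiction with $c(a_{k-1})=\alpha$. The intended contradiction is presumably that the constraints simultaneously force $\alpha$ to be both used and not used on some vertex; I would chase the coloring forced on $F$ (each $f_i\ne c(d_1)$ and $\ne c(a_i)$) and combine it with $c(a_{k-1})=\alpha$ and $c(d_1)=\alpha$ to derive that one $f_i$ has no available color, or that $a_i=\alpha$ for some $i\le k-2$ contradicting $\alpha$ being the color missing on $A$. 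Pinning down exactly which vertex runs out of colors is the crux, and I expect it to require keeping careful count of how many colors are excluded at $f_i$ versus how many colors exist.

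Finally, for the upper bound I would either invoke Theorem~\ref{T1} together with $\omega(G_k)=k$ (after remarking, as the paper does, that $G_k$ is even-hole-free with no star cutset), or exhibit an explicit proper $(k+1)$-coloring directly: color $A$ with colors $1,\dots,k-1$, all of $B$ with color $k$, shift to color $C$ with $1,\dots,k-1$ avoiding the $b_ic_i$ edges, $D$ with color $k$, $E$ with $1,\dots,k-1$, and use the extra color $k+1$ on $a_{k-1}$ and wherever the linking vertices create a conflict, then color $F$ greedily (each $f_i$ has at most two forbidden colors, leaving $\ge k-1\ge 2$ choices). Checking this explicit coloring is routine once the lower-bound analysis has clarified the conflict structure, so I would present it briefly.
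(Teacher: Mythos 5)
Your propagation argument is exactly the paper's, and you have in fact already derived the contradiction without noticing it. You correctly force all of $B$ to be colored $\alpha$ (the unique color missing on the $(k-1)$-clique $A$), then all of $C$ to use the $k-1$ colors other than $\alpha$, then all of $D$ to be colored $\alpha$, hence every $e_i\neq\alpha$; and since $a_{k-1}$ is complete to the $(k-1)$-clique $E$, which uses all $k-1$ colors other than $\alpha$, you conclude $c(a_{k-1})=\alpha$. But $a_{k-1}$ lies in $A$, and $\alpha$ is by definition the color \emph{not} used on $A$ (equivalently: $a_{k-1}$ is complete to $B$, all of whose vertices are colored $\alpha$), so $c(a_{k-1})\neq\alpha$. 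That is the entire contradiction; the paper phrases it as the $k$-clique $\{a_{k-1},e_1,\ldots,e_{k-1}\}$ having all of its vertices forbidden from color $\alpha$ and hence only $k-1$ colors available. The step you flag as the unresolved crux --- chasing colors through $F$, $d_1$ and the edges $a_if_i$ --- is a dead end: those vertices play no role in the chromatic lower bound and are present in the construction only to prevent star cutsets. So the proposal is not a wrong approach, but as written it stops one observation short of a complete proof and looks for the finish in the wrong place.

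The remaining pieces are fine: $\omega(G_k)=k$ by the case analysis you sketch (the paper simply asserts this), and the upper bound $\chi(G_k)\le k+1$ follows from Theorem~\ref{T1} once $\omega(G_k)=k$ and the membership of $G_k$ in the class are established, which is also how the paper implicitly closes that direction.
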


\begin{proof}
It is clear that $\omega(G_k)=k$. We will show that $G_k$ is not $k$-colorable. Suppose we have a $k$-coloring of $G$. Because in that coloring, every clique of size $k$ must be colored by all $k$ different colors, then all the vertices in $B$ must receive the same color $1$. Therefore, the clique $C$ must be colored by $(k-1)$ left colors, and  all the vertices in $D$ must be colored by color $1$ also. Therefore, the $k$-clique $\{a_{k-1},e_1,\ldots,e_{k-1}\}$ is not colorable since all of them must have color different from $1$, a contradiction.
\end{proof}

\section{Rank-width} \label{S:4}

\subsection{Bounded rank-width}

Recall that the definition of rank-width and rank-decomposition are given in the last part of Section \ref{S:2}. Given a graph $G$ and some rank-decomposition $(T,L)$ of $G$, a subset $X$ of $V(G)$ is said to be \emph{separated} in $(T,L)$ if there exists an edge $e_X$ of $T$ corresponding to the partition $(X,V(G)\setminus X)$ of $V(G)$. Let $d$ be an integer, we say that graph $G$ has \emph{property $\mathbb{P}(d)$} if for every set $S$ of disjoint flat paths of length at least $3$ in $G$, there is a rank-decomposition $(T,L)$ of $G$ such that the width of $(T,L)$ is at most $d$ and every flat path $P\in S$ is separated in $(T,L)$. The next lemma shows the relation between $2$-join and rank-width.

\begin{lemma} \label{rank-2-join}
Let $\mathbb{D}$ be a class of graphs and $\mathbb{B}\subseteq\mathbb{D}$ be the set of its basic graphs such that every graph $G\in \mathbb{D}$ has a $2$-join decomposition tree. Furthermore, there exists an integer $d\geq 2$ such that every basic graph in $\mathbb{D}$ has property $\mathbb{P}(d)$. Then for every graph $G\in \mathbb{D}$, $\rwd(G)\leq d$. 
\end{lemma}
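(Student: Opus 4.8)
The plan is to prove Lemma~\ref{rank-2-join} by induction on $|V(G)|$, showing the slightly stronger statement that every $G \in \mathbb{D}$ has property $\mathbb{P}(d)$; since $\mathbb{P}(d)$ applied to $S = \emptyset$ immediately yields $\rwd(G) \le d$, this suffices. The base case is when $G$ is basic, which holds by hypothesis. For the inductive step, I would fix a graph $G \in \mathbb{D}$ that is not basic together with a set $S$ of disjoint flat paths of length at least $3$, and use the $2$-join decomposition tree: $G$ has a $2$-join $(X_1, X_2)$ with split $(X_1, X_2, A_1, B_1, A_2, B_2)$ whose cut-edges avoid all paths in $S$, and each flat path of $S$ lies entirely in $G[X_1]$ or $G[X_2]$. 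Let $G_1, G_2$ be the blocks of decomposition with marker paths $P_2, P_1$ (each of length $3$ or $4$).

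The key construction is to apply the induction hypothesis to $G_1$ with the set $S_1 \cup \{P_2\}$ and to $G_2$ with $S_2 \cup \{P_1\}$, where $S_i$ is the set of paths of $S$ inside $G[X_i]$. Here it is crucial that $P_2$ (resp. $P_1$) is a flat path of length at least $3$ in $G_1$ (resp. $G_2$), which is exactly why the marker paths were chosen with length $3$ or $4$ and why $P_2$ has no neighbors outside its ends in $X_1$. We obtain rank-decompositions $(T_1, L_1)$ of $G_1$ and $(T_2, L_2)$ of $G_2$, each of width at most $d$, in which all the relevant flat paths (including the marker paths) are separated. Because $P_2$ is separated in $(T_1, L_1)$, there is an edge $e_1$ of $T_1$ splitting $V(G_1)$ into $V(P_2)$ and $V(G_1) \setminus V(P_2) = X_1$; similarly an edge $e_2$ of $T_2$ splits off $V(P_1)$ from $X_2$. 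I would then glue: delete from $T_1$ the subtree on the $V(P_2)$-side of $e_1$ (a tiny tree on the few leaves of $P_2$), delete from $T_2$ the subtree on the $V(P_1)$-side of $e_2$, and join the two resulting trees by an edge at the points where $e_1, e_2$ were attached. The leaves of the new subcubic tree $T$ are exactly $X_1 \cup X_2 = V(G)$, so $(T, L)$ with the obvious bijection $L$ is a rank-decomposition of $G$.

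It remains to bound the width of $(T, L)$ and check that every $P \in S$ is still separated. Separation is easy: every flat path of $S_i$ was separated in $(T_i, L_i)$ by an edge $e$, and since the corresponding cut $(Y, V(G_i) \setminus Y)$ either has $Y \subseteq X_i$ or $Y \supseteq V(P_j)$ (marker path), deleting the marker subtree and regrafting does not change which subset of $V(G) = X_1 \cup X_2$ the edge $e$ induces — one just replaces $V(P_j)$ by $X_j$ — so the path stays separated. For the width, edges of $T$ coming from $T_1$ that lie on the $X_1$-side of $e_1$ induce cuts of the form $(Y, V(G) \setminus Y)$ with either $Y \subseteq X_1$ or $Y \supseteq X_2$; in $G_1$ the same edge had cut-rank at most $d$, and I would argue the cut-rank in $G$ is no larger. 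This is the main obstacle and the only place real work is needed: one must show that replacing the marker path $P_2$ (which behaves, across the cut, like a path joining $a_2$ complete to $A_1$ and $b_2$ complete to $B_1$) by the actual side $X_2$ (whose only attachments to $X_1$ are $A_2$ complete to $A_1$ and $B_2$ complete to $B_1$) does not increase the rank of the relevant submatrix of the adjacency matrix. The point is that across any such cut the "$X_1$-world" sees at most two distinct neighborhood-rows, both of which are constant ($0$ or all-$1$) on $A_1$ and on $B_1$ and zero elsewhere, so the rank contributed from the $X_2$-side is at most $2 \le d$; combined with a rank-submodularity / subadditivity bound for the portion inside $X_1$, the total cut-rank stays at most $d$. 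The symmetric analysis applies to edges coming from $T_2$, and the new gluing edge itself induces the cut $(X_1, X_2)$, whose cut-rank is at most $2$ since the bipartite adjacency between $X_1$ and $X_2$ has the block structure ``$A_1$--$A_2$ complete, $B_1$--$B_2$ complete, nothing else'', giving rank at most $2 \le d$. Assembling these estimates shows the width of $(T, L)$ is at most $d$, completing the induction.
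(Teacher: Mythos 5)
Your construction---glue the two rank-decompositions of the blocks along the edges that separate the marker paths, observe that the new joining edge induces the cut $(X_1,X_2)$ of rank at most $2$, and transfer the width bound from the blocks to $G$---is exactly the paper's proof. But there are two genuine problems in how you set up and close the argument. First, you strengthen the induction statement to ``every $G\in\mathbb{D}$ has property $\mathbb{P}(d)$,'' i.e.\ for an \emph{arbitrary} set $S$ of disjoint flat paths of length at least $3$, and you then assert that the $2$-join supplied by the decomposition tree has its cut-edges disjoint from the paths of $S$. Nothing in the hypotheses guarantees this: the non-crossing condition in the definition of a $2$-join decomposition tree is only with respect to the set $S'$ of marker paths accumulated so far (which is empty at the root), not with respect to a set of flat paths chosen in advance, and a cut-edge of a $2$-join can perfectly well lie on a flat path of $G$. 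In addition, your induction is on $|V(G)|$, but a block of decomposition need not be smaller than $G$ (if $|X_2|=3$, the marker path $P_2$ has $4$ or $5$ vertices, so $|V(G_1)|>|V(G)|$), so the induction is not well-founded. Both defects are repaired simultaneously by doing what the paper does: induct along the (finite) tree $\mathbb{T}_G$ from the leaves to the root, proving only the weaker property $\mathbb{P'}(d)$ --- separation of the specific paths $S'$ attached to each node --- which suffices because the root carries $S'=\emptyset$ and one only needs $\rwd(G)\le d$ there.

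Second, the width bound for an edge $e$ of $T$ inherited from $T_1$ is not correctly closed. Such an edge induces a cut $(Z,V(G)\setminus Z)$ with $Z\subseteq X_1$, and you estimate $\cutrk_G(Z)$ by adding the rank of the part of the matrix inside $X_1$ to the rank (at most $2$) of the columns coming from $X_2$; subadditivity only yields a bound of $d+2$, not $d$, since all you know is that the \emph{whole} cut in $G_1$, marker-path columns included, has rank at most $d$. The correct observation, which is the one the paper makes, is that in $M[Z,V(G)\setminus Z]$ every column indexed by a vertex of $A_2$ (resp.\ $B_2$, resp.\ $C_2$) is an exact copy of the column of $a_2$ (resp.\ of $b_2$, resp.\ a zero column) in $M[Z,V(G_1)\setminus Z]$; hence the two matrices have the same column space and $\cutrk_G(Z)=\cutrk_{G_1}(Z)\le d$, with no loss at all. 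You correctly identified ``the rank does not increase'' as the statement to prove, but the justification you sketch does not deliver it.
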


\begin{proof}
Let $G$ be a graph in $\mathbb{D}$ and $\mathbb{T}_G$ be its $2$-join decomposition tree. We prove that every node $(G',S')$ of $\mathbb{T}_G$ satisfies the following \emph{property $\mathbb{P'}(d)$}: there is a rank-decomposition $(T,L)$ of $G'$ such that the width of $(T,L)$ is at most $d$ and every flat path $P\in S'$ is separated in $(T,L)$. Note that property $\mathbb{P'}(d)$ is weaker than property $\mathbb{P}(d)$ since it is not required to be true for every choice of the set of disjoint flat paths, but only for a particular set $S'$ associated with $G'$ in $\mathbb{T}_G$. Proving this property for each node in $\mathbb{T}_G$ implies directly the lemma since if the root of $\mathbb{T}_G$ has property $\mathbb{P'}(d)$, then $\rwd(G)\leq d$. 

It is clear that every leaf node of $\mathbb{T}_G$ has property $\mathbb{P'}(d)$ since every basic graph has property $\mathbb{P}(d)$ by the assumption. Now we only have to prove that every non-leaf node $(G',S')$ of $\mathbb{T}_G$ has property $\mathbb{P'}(d)$ assuming that its two children $(G_1,S_1)$ and $(G_2,S_2)$ already have property $\mathbb{P'}(d)$. For $i\in\{1,2\}$, let $(T_i,L_i)$ be the rank-decomposition of $G_i$ satisfying property $\mathbb{P'}(d)$. We show how to construct the rank-decomposition $(T,L)$ of $G'$ satisfying this property. Recall that by the definition of a $2$-join decomposition tree, $G_1$ and $G_2$ are two blocks of decomposition with respect to some $2$-join $(X_1,X_2)$ of $G'$ together with some marker paths $P_2\in S_1$, $P_1\in S_2$, respectively. For $i\in\{1,2\}$, since $(G_i,S_i)$ satisfies property $\mathbb{P'}(d)$, $P_{3-i}$ is separated in $(T_i,L_i)$ by some edge $e_i=u_iv_i$ of $T_i$. Let $C_i$, $D_i$ be the two connected components (subtrees) of $T_i\setminus e_i$ (the tree obtained from $T_i$ by removing the edge $e_i$), where the leaves of $C_i$ correspond to $V(G_i)\setminus V(P_{3-i})$ and the leaves of $D_i$ correspond to $V(P_{3-i})$. W.l.o.g, we may assume that $u_i$ is in $C_i$ and $v_i$ is in $D_i$. The tree $T$ is then constructed from $T_1[V(C_1)\cup\{v_1\}]$ and $T_2[V(C_2)\cup\{v_2\}]$ by identifying $u_1$ with $v_2$ and $u_2$ with $v_1$. Note that $T$ is a subcubic tree and the leaves of $T$ now correspond to $V(G)$. The mapping $L$ is the union of the two mappings $L_1$ and $L_2$ restricted in $X_1$ and $X_2$, respectively. Now the node $(G',S')$ satisfies property $\mathbb{P}'_d$ since:
\begin{itemize}
	\item Every flat path $P\in S'$ is separated in $(T,L)$.
	
	It is true since for $i\in\{1,2\}$, every path $P\in S_i$ is separated in $(T_i,L_i)$.
	\item The width of $(T,L)$ is at most $d$.
	
	It is easy to see that the width of the identified edge $e=u_1v_1$ of $T$ is $2$, since it corresponds to the partition $(X_1,X_2)$ of $G'$. For other edge $e$ of $C_i$ (for $i=1$ or $2$), it corresponds to a cut of $G'$ separating a subset $Z$ of $X_i$ from $V(G')\setminus Z$, and we have $\cutrk_{G'}(Z)=\cutrk_{G_i}(Z)$ (since the rank of the corresponding matrix stays the same if we just add several copies of the columns corresponding to the two ends of the marker path $P_{3-i}$), which implies that $\cutrk_{G'}(Z)\leq d$.
\end{itemize}  
\end{proof}

Thanks to Lemma \ref{rank-2-join} and the existence of a $2$-join decomposition tree by Lemma \ref{non-crossing},  to prove that the rank-width of even-hole-free graphs with no star cutset is at most $3$, we are left to only prove that every basic even-hole-free graph with no star cutset has property $\mathbb{P}(3)$. Actually, by Observation \ref{ob:1}, we do not have to prove it for cliques and holes, since they never appear in the leaf nodes of any $2$-join decomposition tree of any graph in our class. Therefore, Theorem \ref{T2} is a consequence of Observation \ref{ob:2} and the following lemma:

\begin{lemma} \label{rank-basic}
Every basic even-hole-free graph with no star cutset, which is neither a clique nor a hole, has property $\mathbb{P}(3)$.
\end{lemma}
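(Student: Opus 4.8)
The plan is to treat the two remaining basic classes separately—long pyramids and extended nontrivial basic graphs—and in each case exhibit, for an arbitrary set $S$ of disjoint flat paths of length at least $3$, a rank-decomposition of width at most $3$ in which every path of $S$ is separated. For a long pyramid with apex $y$ and base triangle $x_1x_2x_3$, joined by three paths $P_{x_1y},P_{x_2y},P_{x_3y}$, the key observation is that each $P_{x_iy}$ is essentially a flat path (its interior vertices have degree $2$), so the whole graph is ``almost'' three flat paths glued at two spots. I would build $T$ by first, for each $i$, arranging the vertices of $P_{x_iy}$ along a caterpillar so that any sub-flat-path of $S$ lying inside $P_{x_iy}$ is separated (a flat path always has a separating edge in the obvious linear arrangement, since cutting it off leaves a set whose cut-rank is at most $2$); then joining the three caterpillars near $y$ and near the triangle. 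The cut-rank across any edge internal to one of the caterpillars is at most $3$, because the only vertices outside the current piece that see into it are at most the two endpoints' few neighbors ($x_i$, its base-triangle neighbors, and $y$), which span a column space of dimension at most $3$ over $GF(2)$.

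For an extended nontrivial basic graph $R = L \cup \{x,y\}$ with $L$ the line graph of a tree $T_0$, the plan is to use the tree structure of $T_0$ directly. By Lemma~\ref{pr4}, $x$ and $y$ together have at most one neighbor per extended clique, and these are the leaf nodes; every other vertex of $L$ has its neighborhood a union of two extended cliques. I would root $T_0$, turn it into a rank-decomposition of $L$ following the tree shape (each extended clique of $L$ becomes a small local gadget, with its B-vertex routed toward the root as in the proof of Lemma~\ref{L2}), and then attach $x$ and $y$ as two extra leaves placed adjacent to each other near the root. The flat paths of $S$ inside $L$ correspond to flat paths of $R$; since interior vertices of such a path have degree $2$, each such path sits inside a single ``branch'' of the clique-tree and can be separated by the edge of $T$ that cuts that branch off. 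The cut-rank bound: cutting off a subtree $Z$ of the clique-tree, the vertices outside $Z$ adjacent into $Z$ are confined to at most one extended clique (the ``boundary'' clique), so all of $Z$'s rows, restricted to columns outside $Z$, are supported on one clique; a clique contributes an all-ones block, which has rank $1$ over $GF(2)$, plus the contributions of $x$ and $y$ (at most one neighbor of each inside $Z$, by Lemma~\ref{pr4}), giving cut-rank at most $1+1+1 = 3$. When a flat path of $S$ has $x$ or $y$ as an endpoint, I would route that path off to the side first, exactly as in the two-case split in the proof of Lemma~\ref{L2}, so that $x$ (or $y$) is peeled off with the path and the residual graph is again a line graph of a tree.

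The main obstacle I anticipate is the bookkeeping of the cut-rank across the ``gluing'' edges—the places where the three pyramid paths meet, or where $x,y$ attach to the clique-tree—rather than across the generic internal edges. At a gluing point several flat paths and the special vertices all become visible simultaneously, and one must check that, even there, the relevant submatrix of the adjacency matrix over $GF(2)$ has rank at most $3$. The saving grace is that the total ``interface'' at any such point is tiny: in the pyramid it is the triangle $\{x_1,x_2,x_3\}$ (rank $\le 3$) or the apex $y$ (rank $\le 1$); in the extended basic graph it is one extended clique plus the pair $\{x,y\}$. So the argument reduces to: an all-ones block over $GF(2)$ has rank $1$, and adding a bounded number of extra rows/columns adds at most that many to the rank. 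I would organize the write-up so that this ``interface is small, hence cut-rank is small'' principle is stated once and applied uniformly to both basic classes, and so that the separation requirement for paths in $S$ is handled by the single remark that a flat path always has a separating edge of the right width in any linear/caterpillar layout that keeps the path's internal structure intact.
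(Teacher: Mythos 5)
Your proposal follows essentially the same route as the paper's proof: decompose the extended nontrivial basic graph along the tree structure underlying the line graph, lay out each flat path as a caterpillar so that every path of $S$ gets its own separating edge, place $x$ and $y$ with the leaf paths they attach to (moving them onto the path of $S$ they terminate when necessary), and bound every cut-rank by $1+1+1$ coming from one boundary extended clique plus the columns of $x$ and $y$; the pyramid case is handled analogously in both. One small correction: the contribution of $x$ (resp.\ $y$) to a cut is at most $1$ simply because it is a single column of the adjacency matrix over $GF(2)$, not because it has ``at most one neighbor inside $Z$'' --- Lemma~\ref{pr4} only bounds its neighbors per extended clique, and the separated set $Z$ may contain many extended cliques, so $x$ may have many neighbors in $Z$ without affecting the rank bound.
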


\begin{proof}
Let $G$ be a basic even-hole-free graph with no star cutset, which is different from a clique and a hole. Since $G$ is basic and $G$ is neither a clique nor a hole, $G$ must be an extended nontrivial basic graph or a long pyramid. Since the case where $G$ is a long pyramid can be followed easily from the case where it is an extended nontrivial basic graph. We omit the details for long pyramids here.

Let $G$ be an extended nontrivial basic graph, $V(G)=V(H)\cup \{x,y\}$, where $H$ is the line graph of a tree. Let $S$ be some set of flat paths of length at least $3$ in $G$. Now we show how to build the rank-decomposition of $G$ satisfying the lemma.

First, we construct the \textit{characteristic} tree $F_H$ for $H$. We choose an arbitrary extended clique in $H$ as a \textit{root} clique. Let $E$ be the set of flat paths obtained from $H$ by removing all the edges of every extended clique in $H$. Now, we define the father-child relation between two flat paths in $E$. A path $B$ is the \emph{father} of some path $B'$ if they have an endpoint in the same extended clique in $H$ and any vertex of $B$ is a cut-vertex in $H$ which separates $B'$ from the root clique. If $B$ is the father of $B'$ then we also say that $B'$ is a \emph{child} of $B$. Any path in $E$ which has only one endpoint in an extended clique is called \textit{leaf} path, otherwise it is called \textit{internal} path. Now, we consider each path $B$ in $E$ as a vertex $v_B$ in the characteristic tree $F_H$, and associate with each node $v_B$ a set $S_{v_B}=V(B)$. Each leaf path corresponds to a leaf in $F_H$ and each internal path corresponds to an internal node in $F_H$, which reserves the father-child relation (if a path $B$ is the father of some path $B'$ then $v_B$ is the father of $v_{B'}$ in $F_H$). We also add a root $r$ for $F_H$, and the children of $r$ are all the vertices $v_B$, where $B$ is a path with an endpoint in the root clique, let $S_r=\emptyset$. Now, we add two special vertices $x$, $y$ to attain the \textit{characteristic} tree $F_G$ for $G$. If $x$ (or $y$) is an endpoint of some flat path $P$ in $S$, then we set $S_v=S_v\cup\{x\}$ ($S_v=S_v\cup\{y\}$, respectively), where $v$ is the leaf in $F_H$ corresponding to the leaf path in $E$ which contains $P\setminus\{x\}$ ($P\setminus\{y\}$, respectively). Otherwise set $S_v=S_v\cup\{x\}$ ($S_v=S_v\cup\{y\}$), where $v$ is a leaf in $F_H$ corresponding to any path in $E$ having an endpoint adjacent to $x$ ($y$, respectively). Figures \ref{F:2} and \ref{F:3} are the example of an extended nontrivial basic graph $G$ and its characteristic tree $F_G$ (the bold edges are the edges of flat paths in $S$). Note that each node in $F_G$ corresponds to a subset $S_v$ of $V(G)$, they are all disjoint, each of them induces a flat path in $G$ and $V(G)=\cup_{v\in F_G}S_v$. 

\begin{figure}[h]
\centering
\includegraphics[width=7cm]{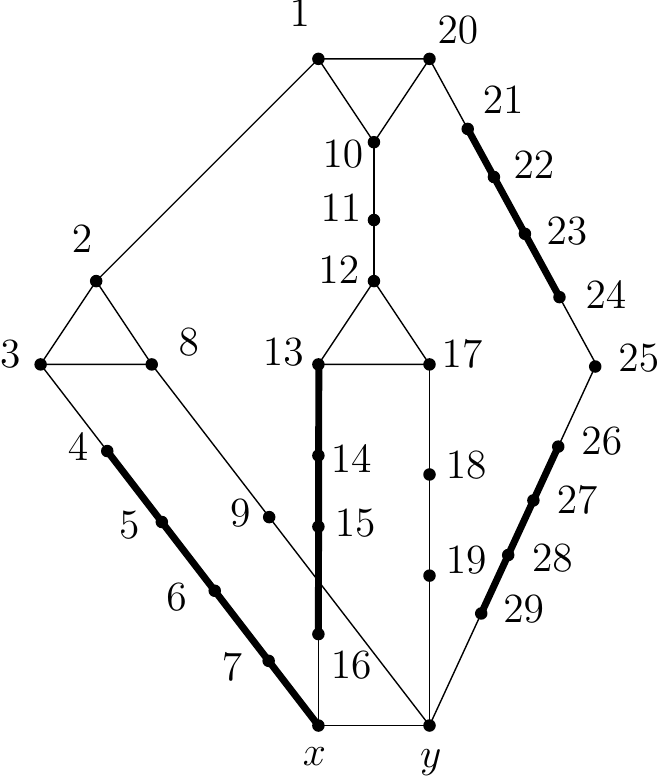}
\caption{An extended nontrivial basic graph $G$ with a set of flat paths.}
\label{F:2}
\end{figure}

\begin{figure}[h]
\centering
\includegraphics[width=11cm]{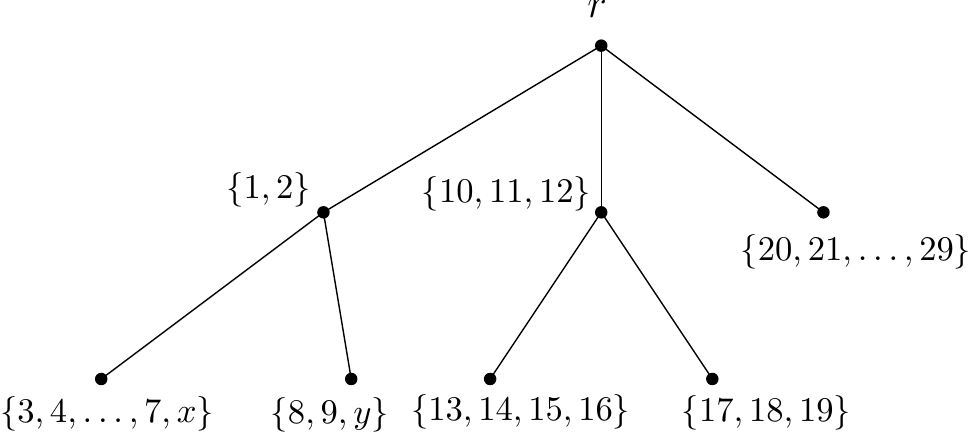}
\caption{The characteristic tree $F_G$ for graph $G$ in Figure \ref{F:2}.}
\label{F:3}
\end{figure}

Now, we show how to build the rank-decomposition of $G$ from its characteristic tree $F_G$. We first define a special rooted tree, called \emph{$k$-caterpillar} ($k\geq 1$) to achieve that goal. For $k\geq 1$, a graph $I$ is called \emph{$k$-caterpillar} if:
\begin{itemize}
	\item For $k=1$, $V(I)=\{a_1,l_1\}$, $E(I)=\{a_1l_1\}$ and $a_1$ is the root of $I$.
	\item For $k\geq 2$, $V(I)=\{a_1,\ldots,a_{k-1}\}\cup \{l_1,\ldots,l_{k}\}$, $E(I)=\{a_ia_{i+1}|1\leq i\leq k-2\}\cup \{a_il_i|1\leq i\leq k-1\}\cup \{a_{k-1}l_k\}$ and $a_1$ is the root of $I$. 
\end{itemize}

Notice that in the following discussion, for the sake of construction, the rank-decomposition $(T,L)$ we build for our graph is not exactly the same as in the definition of a rank-decomposition mentioned in Section \ref{S:2}, since we allow vertex of degree $2$ in tree $T$, but it does not change the definition of rank-width. A flat path in $G$ is called \textit{mixed} if it contains a flat path in $S$ but it is not a flat path in $S$. We start by constructing the rank-decomposition of a non-mixed flat path in $G$. For a non-mixed flat path $P=p_1\ldots p_k$, we create a $k$-caterpillar $T_P$ which has exactly $k$ leaves $l_1,\ldots,l_k$ as in the definition and a bijection $L_P$ maps each vertex in $P$ to a leaf of $T_P$ such that $L_P(p_i)=l_i$. Since a mixed path can always be presented as a union of vertex-disjoint non-mixed paths $P=\cup_{i=1}^kP_i$ (where one end of $P_i$ is adjacent to one end of $P_{i+1}$ for $1\leq i\leq k-1$), let $(T_i,L_i)$ be the rank-decomposition for each non-mixed path $P_i$ constructed as above, we can build the tree $T_P$ by creating a $k$-caterpillar $I$ which has exactly $k$ leaves $l_1,\ldots,l_k$ as in the definition and identify each root of $T_i$ with the leaf $l_i$ of $I$ for $1\leq i\leq k$. Also, let the mapping $L_P$ from $V(P)$ to the leaves of $T_P$ be the union of all the mappings $L_i$'s for $1\leq i\leq k$. Now, we build the rank-decomposition $(T_G,L_G)$ of $G$ from its characteristic tree $F_G$ by visiting each node in $F_G$ in an order where all the children of any internal node is visited before its father. For a vertex $v\in F_G$, denote by $C_v$ the union of all connected components of $F_G\setminus v$ that does not contain $r$. Let $F_G(v)=F_G[V(C_v)\cup\{v\}]$, $X_v=\cup_{u\in F_G(v)}S_u$. At each node $v$ of $F_G$, we build the rank-decomposition $(T_v,L_v)$ of the graph $G_v$ induced by the subset $X_v$ of $V(G)$ by induction:
\begin{enumerate}
	\item If $v$ is a leaf of $F_G$, build the rank-decomposition $(T_v,L_v)$ for the flat path corresponding to $v$ like above argument for mixed and non-mixed paths.
	\item If $v$ is an internal node of $F_G$ different from its root and $v_1,\ldots,v_k$ are its children. Let $(T,L)$ be the rank-decomposition of the flat path corresponding to $v$ (built by above argument for mixed and non-mixed paths) and $(T_i,L_i)$ ($i=1\ldots k$) be the rank-decomposition of $G[X_{v_i}]$. We build $T_v$ by constructing a $(k+1)$-caterpillar having exactly $(k+1)$ leaves $l_1,\ldots,l_{k+1}$ as in the definition and identify the root of $T$ with $l_1$, the root of $T_i$ with $l_{i+1}$ for $1\leq i\leq k$. Let the mapping $L_v$ from $X_{v_i}$ to the leaves of $T_v$ be the union of the mapping $L$ and all the mappings $L_i$'s for $1\leq i\leq k$.  
	\item If $v$ is the root of $F_G$ and $v_1,\ldots,v_k$ are its children. Let $(T_i,L_i)$ ($i=1\ldots k$) be the rank-decompositions of $G[X_{v_i}]$. We build $T_v$ by constructing a $k$-caterpillar having exactly $k$ leaves $l_1,\ldots,l_{k}$ as in the definition and identify the root of $T_i$ with $l_i$ for $1\leq i\leq k$. Let the mapping $L_v$ from $V(G)$ to the leaves of $T_v$ be the union of all the mappings $L_i$'s for $1\leq i\leq k$.  
\end{enumerate}   

\begin{figure}[h]
\centering
\includegraphics[width=12cm]{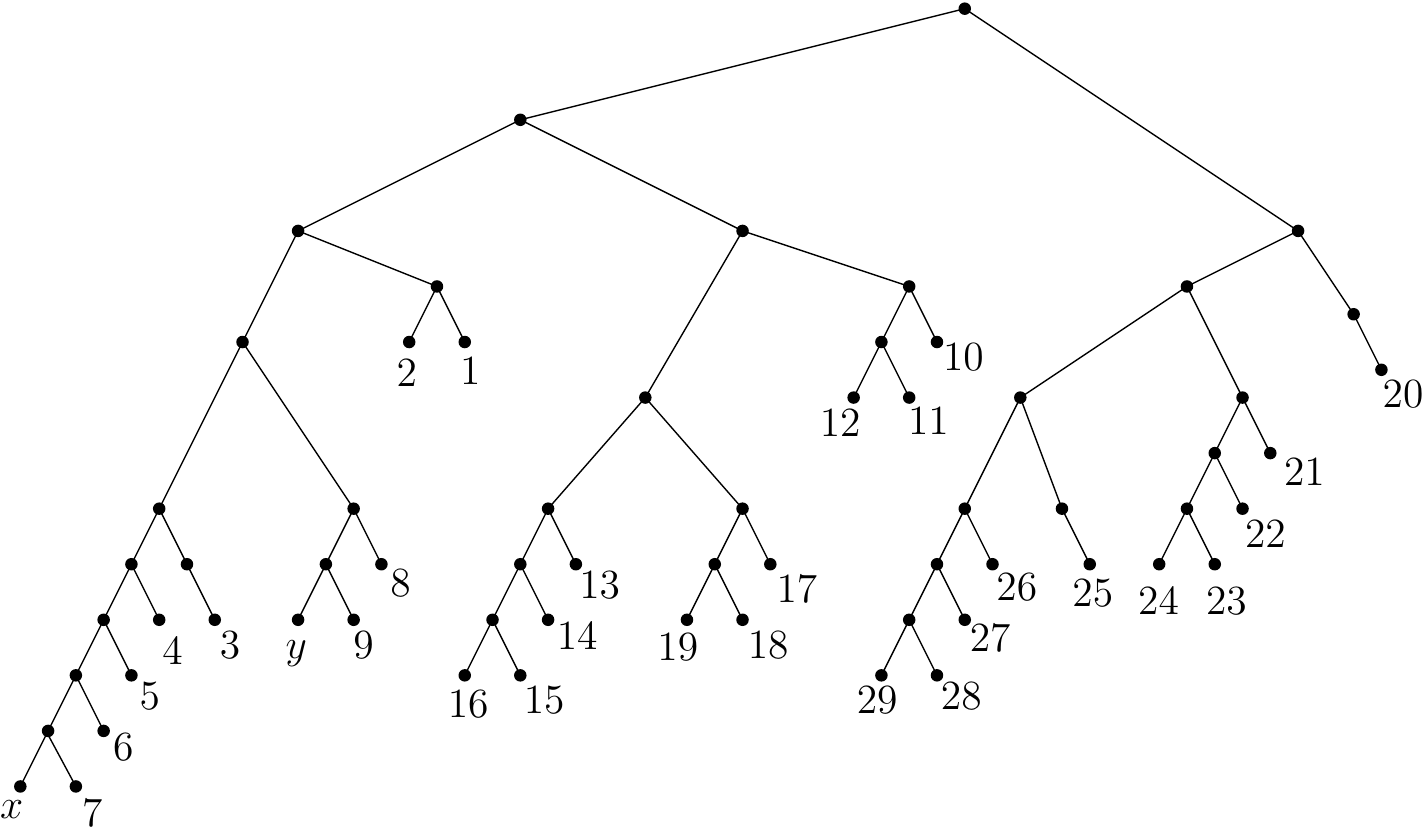}
\caption{The rank-decomposition for graph $G$ in Figure \ref{F:2}.}
\label{F:4}
\end{figure}

The rank-decomposition $(T_r,L_r)$ corresponding to the root $r$ of $F_G$ is the desired rank-decomposition $(T,L)$ for $G$ (see Figure \ref{F:4}). Now we prove that this rank-decomposition construction for the extended nontrivial basic graphs $G$ satisfies the lemma.

\begin{proposition} \label{prop1}
Let $(T,L)$ be the above constructed rank-decomposition for $G$. Then, every flat path $P$ in $S$ is separated in $(T,L)$.
\end{proposition}

\begin{proof}
It is trivially true, because $P$ is a non-mixed subpath of some flat path $B$ in $E$, so $V(P)$ is separated in the rank-decomposition of $B$. And each flat path $B$ of $G$ is also separated in the rank-decomposition of $G$ by our construction. So $V(P)$ is separated in $(T,L)$. 
\end{proof}

\begin{proposition} \label{prop2}
The above constructed rank-decomposition $(T,L)$ of $G$ has width at most $3$. 
\end{proposition}
 
\begin{proof}
We prove by the structure of the characteristic tree $F_G$ of $G$. For an internal node $v$ of $F_G$, let $v_1,\ldots,v_k$ be its children, in some sense, the decomposition tree $T_v$ for $X_v$ is obtained by ``glueing'' the decomposition tree for $G[S_v]$ and all the decomposition trees $T_i$ for $G[X_{v_i}]$ for $1\leq i\leq k$ along a cut-vertex. Therefore, we consider an edge $e$ of $T_v$ as an edge of $T$ as well. Our goal is to prove that the width of any edge $e$ with respect to the rank-decomposition $(T,L)$ of $G$ is at most 3. For the sake of induction, at each node $v$ of $F_G$, we prove that the width of any edge $e$ of $T_v$ is at most $3$ with respect to the rank-decomposition $(T,L)$ of $G$ (we mention $v$ here just to specify an edge in our tree $T$):
\begin{enumerate}
	\item If $v$ is a leaf in $F_G$. Every edge $e$ of $T_v$ corresponds to a partition of $V(G)$ into two parts where one of them is a subpath of the flat path corresponding to $v$, so the width of $e$ is at most $2$.
	\item If $v$ is an internal node in $F_G$ and $v_1,\ldots,v_k$ are its children. Let $(T_i,L_i)$ ($i=1\ldots k$) be the rank-decompositions of $G[X_{v_i}]$. Let $e$ be an edge of $T_v$. If $e$ is an edge of $T_i$ then the width of $e$ is at most $3$ by induction. Otherwise, $e$ corresponds to one of the following situations:
	\begin{itemize}
		\item $e$ corresponds to a partition $(V(P),V(G)\setminus V(P))$ of $V(G)$, where $P$ is a subpath of the flat path $G[S_v]$. In this case, the width of $e$ is clearly at most $2$.
		\item $e$ corresponds to a partition $(U,V(G)\setminus U)$ of $V(G)$, where $U$ is the union of several $X_{v_i}$'s. Let $K$ be the extended clique intersecting every $X_{v_i}$. In this case, there are only three types of neighborhood of vertices of $U$ in $G\setminus U$:
		\begin{itemize}
			\item $K\setminus U$, 
			\item $x$ if $x\notin U$, or $N(x)\setminus U$ if $x\in U$, and
			\item $y$ if $y\notin U$, or $N(y)\setminus U$ if $y\in U$.
		\end{itemize}	
		Therefore, the width of $e$ is at most $3$. 	 
	\end{itemize}
\end{enumerate}  
\end{proof} 
Lemma \ref{rank-basic} is true because of the Propositions \ref{prop1} and \ref{prop2}.
\end{proof}

\subsection{An even-hole-free graph with no clique cutset and unbounded rank-width}

It is clear that clique cutset is a particular type of star cutset. However, the class of even-hole-free graph with no clique cutset (a super class of even-hole-free graph with no star cutset) does not have bounded rank-width. Since clique-width and rank-width are equivalent, now we show how to construct for every $k\geq 4$, $k$ even an even-hole-free graph $G_k$ with no clique cutset and $\cwd(G_k)\geq k$. The set of vertices of $G_k$: $V(G_k)=\cup_{i=0}^{k}A_i$, where each $A_i=\{a_{i,0},\ldots,a_{i,k}\}$ is a clique of size $(k+1)$. We also have edges between two consecutive sets $A_i$, $A_{i+1}$ ($i=0,\ldots,k$, the indexes are taken modulo $(k+1)$). They are defined as follows: $a_{i,j}$ is adjacent to $a_{i+1,l}$ iff $j+l\leq k$.
\begin{lemma}
For every $k\geq 4$, $k$ even, $G_k$ is an even-hole-free graph with no clique cutset.
\end{lemma}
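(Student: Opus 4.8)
The plan is to verify the two assertions — that $G_k$ has no clique cutset and that $G_k$ has no even hole — separately, in each case exploiting the rigidity of the construction: edges occur only inside a layer $A_i$ or between consecutive layers $A_i,A_{i+1}$, and the cross-layer adjacency is the threshold relation $a_{i,j}\sim a_{i+1,l}\iff j+l\le k$; by the cyclic convention, for a vertex $a_{i,j}$ and a neighbouring layer, adjacency to $a_{i\pm1,l}$ holds precisely when $j+l\le k$.

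First I would handle the clique cutset. Observe that any clique of $G_k$ lies in $A_i\cup A_{i+1}$ for a single index $i$: three pairwise adjacent vertices cannot occupy three distinct layers, since the layer indices with their natural cyclic adjacency form the triangle-free cycle $C_{k+1}$ (here $k+1\ge5$). Now fix a clique $C\subseteq A_i\cup A_{i+1}$; I claim $G_k\setminus C$ is connected. The $k-1\ge3$ layers other than $A_i,A_{i+1}$ form a contiguous arc of $C_{k+1}$, so they induce a connected subgraph $R$ (consecutive layers are joined, e.g.\ by $a_{j,0}a_{j+1,0}$). Every surviving vertex $a_{i,j}$ of $A_i$ is adjacent to $a_{i-1,0}$, and every surviving $a_{i+1,j}$ is adjacent to $a_{i+2,0}$ (since $j+0\le k$); these attachment vertices lie in $R$ and are not in $C$ because $C\subseteq A_i\cup A_{i+1}$. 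Hence $G_k\setminus C$ is connected, so $C$ is not a cutset.

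For even-hole-freeness the heart is the structural claim that \emph{every hole $H$ meets each layer $A_i$ in at most one vertex}. Suppose to the contrary $u=a_{i,p}$ and $v=a_{i,q}$ both lie in $H\cap A_i$. Since a hole is triangle-free and $A_i$ is a clique, $H\cap A_i=\{u,v\}$ and $uv$ is an edge of $H$; let $c$ and $d$ be the other $H$-neighbours of $u$ and $v$. Then $u,v,c,d$ are pairwise distinct, and $c\not\sim v$, $d\not\sim u$ (they are opposite vertices when $|H|=4$, and at distance two in $H$ when $|H|\ge5$); also $c,d\notin A_i$, for a vertex of $A_i$ is adjacent to all of $A_i\supseteq\{u,v\}$, contradicting $c\not\sim v$ (respectively $d\not\sim u$). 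Thus $c,d$ lie in the neighbouring layers $A_{i-1}\cup A_{i+1}$, and the threshold condition applies to both: $c\sim u$ with $c\not\sim v$ yields $q>p$, while $d\sim v$ with $d\not\sim u$ yields $p>q$, a contradiction. Granting the claim, the cyclic list of layers visited by $H$ consists of distinct elements of $\mathbb{Z}_{k+1}$ with consecutive terms differing by exactly $\pm1$; such a closed walk on $C_{k+1}$ cannot change direction without at once revisiting a layer, so it winds once around, forcing $|H|=k+1$. As $k$ is even, $k+1$ is odd, so $G_k$ is even-hole-free.

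I expect the main obstacle to be the case analysis in the ``at most one vertex per layer'' claim: one must distinguish $|H|=4$ from $|H|\ge5$ when justifying $c\not\sim v$ and $d\not\sim u$, and allow for $c$ and $d$ lying in the same neighbouring layer — in every sub-case the two threshold inequalities must be checked to collapse to the impossible conclusion $q>p$ and $p>q$. The remaining pieces (confining cliques to two layers, connectivity of $G_k\setminus C$, and the winding argument on $C_{k+1}$) are routine once set up carefully.
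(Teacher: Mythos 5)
Your proof is correct and follows the same route as the paper: every clique is confined to two consecutive layers, so its removal cannot disconnect the graph, and every hole meets each layer exactly once, hence has odd length $k+1$. The paper merely asserts these two structural facts ``by the construction,'' whereas you supply the threshold-inequality argument (deriving $q>p$ and $p>q$) and the winding argument on $C_{k+1}$ that actually justify them.
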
  
\begin{proof}
By the construction, there is no hole in $G_k$ that contains two vertices in some set $A_i$ and every hole must contain at least a vertex in each set $A_i$. Therefore, every hole in $G_k$ has exactly one vertex from each set $A_i$, so its length is $(k+1)$ (an odd number). Hence, $G_k$ is even-hole-free. 

We see that every clique in $G_k$ is contained in the union of some two consecutive sets $A_i$, $A_{i+1}$. Hence, its removal does not disconnect $G_k$. Therefore, $G_k$ has no clique cutset.
\end{proof}

\begin{lemma}
For every $k\geq 4$, $k$ even, $\cwd(G_k)\geq k$.
\end{lemma}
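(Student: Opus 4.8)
The plan is to use the standard link between clique-width and ``balanced cuts with few neighbourhood types''. Recall the folklore fact that if $\cwd(H)\le t$, then a parse tree of a $t$-expression for $H$ is a binary tree with $|V(H)|$ leaves, so it has an edge whose removal splits the leaves into two sets $W,\overline W$ with $|W|,|\overline W|\ge |V(H)|/3$; the subexpression below that edge builds $H[W]$ as a $t$-labelled graph, so any two vertices of $W$ with the same label have the same neighbourhood in $\overline W$, whence $|\{N_H(w)\cap\overline W: w\in W\}|\le t$. (Alternatively one can prove $\rwd(G_k)\ge k$ and use $\cwd(G_k)\ge\rwd(G_k)$: every rank-decomposition is a subcubic tree, so it has an edge giving such a balanced partition $(A,\overline A)$, and one needs $\cutrk_{G_k}(A)\ge k$.) So, writing $n=(k+1)^2$, it suffices to prove: \emph{for every partition $(W,\overline W)$ of $V(G_k)$ with $|W|,|\overline W|\ge n/3$, the vertices of $W$ realise at least $k$ distinct sets $N_{G_k}(w)\cap\overline W$}. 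We prove this for every \emph{ordered} such partition, so that whichever side is the subtree side we contradict $t=k-1$.

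To handle a partition $(W,\overline W)$ I would exploit that the bipartite graph between consecutive layers $A_i,A_{i+1}$ is a half-graph. Writing $W_i=\{j:a_{i,j}\in W\}$ and $\overline W_{i+1}=\{l:a_{i+1,l}\in\overline W\}$, the sets $N(a_{i,j})\cap A_{i+1}\cap\overline W=\{a_{i+1,l}:l\in\overline W_{i+1},\ l\le k-j\}$, for $j\in W_i$, form a chain under inclusion, and consecutive members of the chain differ exactly when $\overline W_{i+1}$ meets a prescribed ``window'' of positions; moreover the within-layer part of $N(a_{i,j})\cap\overline W$ is the same for all $a_{i,j}\in W\cap A_i$. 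Two clean sufficient conditions drop out: (i) if some layer $A_i$ lies entirely in $W$ and a neighbouring layer has $\ge k$ vertices in $\overline W$, then the $k+1$ vertices of $A_i$ already realise $\ge k$ distinct neighbourhoods in $\overline W$; (ii) if some layer lies entirely in $\overline W$ and a neighbouring layer $A_{i'}$ has $\ge k$ vertices in $W$, then those $W$-vertices of $A_{i'}$ realise $\ge k$ distinct neighbourhoods in $\overline W$. (Both reduce to: a chain of $m$ nested, evenly spread subsets of the $(k{+}1)$-element half-graph has $m$ distinct members; and the $\overline W$-side statements follow by the dihedral symmetry of $G_k$.) Short runs of monochromatic layers are also easy, since at the boundary of a $W$-run and a $\overline W$-run condition (i) applies.

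The remaining and main case is when no layer is monochromatic. Here I would argue in two steps: if some split layer $A_i$ has its $W$-part spread out against a $\overline W$-rich pair of neighbours — quantitatively, $|A_i\cap W|$ together with the density of $\overline W$ in $A_{i-1}\cup A_{i+1}$ forces $\ge k$ windows to be hit — the half-graph chain argument finishes it; otherwise every split layer contributes few distinct neighbourhood types, but the contributions of layers that are pairwise at cyclic distance $\ge 3$ are pairwise distinct (a set $N(v)\cap\overline W$ for $v\in A_i$ always contains the nonempty set $A_i\cap\overline W$ and is contained in $A_{i-1}\cup A_i\cup A_{i+1}$), so they add up, and a counting argument over the $\approx(k+1)/3$ well-separated layers, using $|W|,|\overline W|\ge n/3$, produces a total of $\ge k$ distinct types. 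I expect the genuine difficulty to be making this counting sharp enough to reach \emph{exactly} $k$ rather than merely $\Omega(k)$ — in particular excluding configurations such as an almost-monochromatic layer wedged between two almost-monochromatic layers — which likely needs a short dedicated case check, and is presumably where the hypothesis that $k$ is even (so that $k+1$ is odd and the cyclic chain of half-graphs carries a genuine twist) is used.
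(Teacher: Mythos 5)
Your approach is genuinely different from the paper's, which disposes of the lemma in one line: deleting $A_0\cup\bigcup_{i=1}^{k}\{a_{i,0}\}$ from $G_k$ leaves an induced subgraph isomorphic to the permutation graph $H_k$ of Golumbic and Rotics, for which $\cwd(H_k)\ge k$ is already known, and clique-width is monotone under taking induced subgraphs. What you propose is, in effect, to reprove that lower bound from first principles on $G_k$ itself. The framework is sound: the balanced-cut argument correctly reduces the problem to showing that every ordered partition $(W,\overline W)$ with both sides of size at least $n/3$ realises at least $k$ distinct traces $N(w)\cap\overline W$ for $w\in W$, and your analysis of the half-graph chains and of the monochromatic-layer cases (i) and (ii) checks out (e.g.\ in case (i) the number of distinct traces into $A_{i+1}$ is $1+|\overline W\cap\{a_{i+1,1},\dots,a_{i+1,k}\}|\ge k$, and the $A_{i-1}$-parts cannot merge distinct $A_{i+1}$-parts).

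However, the proof is not complete. The main case --- no layer monochromatic --- is exactly where the difficulty of the Golumbic--Rotics argument lives, and you leave it as a plan (``I would argue\dots'', ``I expect the genuine difficulty to be\dots'') with an explicitly acknowledged gap in the counting. The observation that well-separated split layers contribute pairwise distinct types only gives additivity; you still need a quantitative per-layer lower bound on the number of types a split layer contributes, together with a global count showing these sum to at least $k$ for \emph{every} balanced partition, and neither is supplied. One further point: your guess that the parity hypothesis enters through a ``twist'' in the cyclic chain is off --- the assumption that $k$ is even is needed only so that the $(k+1)$-holes of $G_k$ are odd, i.e.\ for the even-hole-freeness established in the preceding lemma; the clique-width bound itself does not use it, and the paper's proof in fact passes to an induced subgraph in which the cyclic structure is destroyed. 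As it stands your argument is a plausible but unfinished alternative; the quickest repair is to exhibit $H_k$ as an induced subgraph and cite its known clique-width.
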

\begin{proof}
The graph obtained from $G_k$ by deleting all the vertices in $A_0\cup_{i=1}^{k} \{a_{i,0}\}$ is isomorphic to the permutation graph $H_k$ introduced in \cite{GR2000}. And because it was already proved in that paper that $\cwd(H_k)\geq k$, and clique-width of $G_k$ is at least the clique-width of any of its induced subgraph then $\cwd(G_k)\geq k$. 
\end{proof}

Note that another example of an even-hole-free graph with no clique cuset and unbounded rank-width is also given in \cite{VTRMLA17}.

\bibliographystyle{abbrv}
\bibliography{color_even_hole}

\end{document}